\documentclass[10pt,a4paper]{amsart}
\usepackage[utf8]{inputenc}
\usepackage{amstext, amssymb, amsthm, amsfonts, latexsym,bbm,tikz,sfmath,color,varioref}
\usetikzlibrary{matrix,positioning}

\usepackage[version=4]{mhchem}

\DeclareMathOperator{\diag}{diag}

\usepackage[table]{xcolor}
\usepackage{chemformula}
\usepackage{enumerate}
\usepackage[all]{xy}
\usepackage{graphicx}
\usepackage{pifont,mathtools}

\usepackage{hyperref}
\usepackage{float}

\usepackage{setspace}
\usepackage{dsfont}
\usepackage{geometry}
\usepackage{enumitem}
\usepackage{todonotes}
\usepackage{bbm}
\usepackage[utf8]{inputenc}
\usepackage[english]{babel}
\usepackage{textcomp}

 \usepackage{algorithm}
\usepackage{algpseudocode}
\usepackage{comment}
\usepackage{tikz-cd}
\usepackage{hyperref}

\usepackage[style=authoryear,  maxcitenames=1,
    mincitenames=1,maxbibnames=99,backend=biber, giveninits=true, dashed=false]{biblatex}
\renewbibmacro{in:}{}
\DeclareNameAlias{author}{last-first}

\DeclareFieldFormat{articletitle}{#1}
\DeclareFieldFormat[journal]{title}{#1}
\DeclareFieldFormat[article]{pages}{#1}
\DeclareFieldFormat{doi}{%
  \addcolon\space%
  \ifhyperref
    {\href{https://doi.org/#1}{\nolinkurl{#1}}}
    {\nolinkurl{#1}}%
}

\renewbibmacro*{journal+issuetitle}{%
  \usebibmacro{journal}%
  \setunit*{\addspace}%
  \printfield{volume}%
  \setunit{\addcolon}
  \newunit%
}

\renewbibmacro*{finentry}{
  \finentry%
}

\makeatletter
\newcommand\level[1]{%
  \ifcase#1\relax\expandafter\chapter\or
    \expandafter\section\or
    \expandafter\subsection\or
    \expandafter\subsubsection\else
    \def\next{\@level{#1}}\expandafter\next
  \fi}
\newcommand{\@level}[1]{%
  \@startsection{level#1}
    {#1}
    {\z@}%
    {-3.25ex\@plus -1ex \@minus -.2ex}%
    {1.5ex \@plus .2ex}%
    {\normalfont\normalsize\bfseries}}

\newcounter{level4}[subsubsection]
\@namedef{thelevel4}{\thesubsubsection.\arabic{level4}}
\@namedef{level4mark}#1{}
\count@=4
\loop\ifnum\count@<100
  \begingroup\edef\x{\endgroup
    \noexpand\newcounter{level\number\numexpr\count@+1\relax}[level\number\count@]
    \noexpand\@namedef{thelevel\number\numexpr\count@+1\relax}{%
      \noexpand\@nameuse{thelevel\number\count@}.\noexpand\arabic{level\number\numexpr\count@+1\relax}}
    \noexpand\@namedef{level\number\numexpr\count@+1\relax mark}####1{}}
  \x
  \advance\count@\@ne
\repeat
\makeatother
\newtheorem{theorem}{Theorem}[section]
\newtheorem{lemma}[theorem]{Lemma}

\newtheorem{prop}[theorem]{Proposition}

\theoremstyle{definition}
\newtheorem{definition}{Definition}[section]

\theoremstyle{remark}
\newtheorem{remark}[theorem]{Remark}

\def\1{\mathbf{1}}

\def\E{\mathbb{E}}
\def\R{\mathbb{R}}

\def\Z{\mathbb{Z}}

\numberwithin{equation}{section}

\begin{document}

\title[Stability and feasibility of Microbial Consumer-Resource Model]{\vspace*{-1.4cm}Stability and feasibility of microbial Consumer-Resource Model}

\author{
Louis Faul\textsuperscript{1}, 
Xavier Richard\textsuperscript{2}, 
Jan Roelof Van der Meer\textsuperscript{3},
Christian Mazza\textsuperscript{1}}

\thanks{\textsuperscript{1} Department of Mathematics, University of Fribourg, Switzerland}
\thanks{\textsuperscript{2} University of Applied Sciences of Western
Switzerland (HES-SO), Switzerland}
\thanks{\textsuperscript{3} Department of Fundamental Microbiology, University of Lausanne, Switzerland}

\thanks{\textbf{Corresponding author:} Christian Mazza, Email: \href{mailto:christian.mazza@unifr.ch}{christian.mazza@unifr.ch}}

\date{}
\begin{abstract}
Microbial communities are ubiquitous in nature but how they grow on available resources is still poorly understood. Communities are complex systems harboring thousands of microbial species that interact through resource competition. The classical MacArthur consumer-resource model has been shown to underestimate formed community biomass. A recent new microbial consumer-resource model (MiCRM) considers the inclusion of inter-specific interactions mediated by metabolite exchange (cross-feeding)  where the various bacterial growth byproducts can be reused by other species for their own growth. We study persistence, feasibility and stability for MiCRM under some simplifying assumptions using slow-fast approximation. We show e.g. the non-persistence of the microbial community when the number of resource species $M$
is smaller than the number of consumer species $S$. We then study the stability of the slow steady state when the number or survivors $S^*$ is smaller than $M$, and show that such equilibria are generically stable. We finally propose a stochastic slow-fast version of the model having fast Poisson steady state and study related extinction events.
 \end{abstract}

\clearpage\maketitle
\thispagestyle{empty}
\maketitle
\vspace*{-0.6cm}

\section{Introduction}\label{Introduction}

Microbial communities have a critical role in nature, and understanding their dynamical properties is essential as they contribute to human, plant, animal and environmental health \parencite{van2022ecological}. 
 Mathematical models of growth for single microbial species in isolation lead to good predictions, the most famous of them being the Monod model \parencite{monod}. Natural microbial communities can harbor thousands of species that interact indirectly through resource competition.  MacArthur \parencite{MacArthur69,MacArthur70} was the first to propose a model taking into account both species and resource dynamics, leading to his famous Consumer-Resource Model (CRM). 
 MacArthur studied a simplified version of the CRM by focusing on a
slow-fast approximation, assuming that resources attain equilibrium much faster than species. 
In this proxy and for fixed species abundances, one solves steady-state equations for resources, and the resulting species dynamics
reduces to classical Lotka-Volterra competition model, whose stability is ensured by a Lyapunov function. \\


Natural microbial ecosystems are driven by inter-specific interactions mediated by metabolite exchanges, where diverse bacterial metabolic byproducts can be reused  by other species for their own growth, corresponding to a cross-feeding mechanism, see, e.g.,
 \parencite{culp2023cross}. Previous studies have shown that models neglecting metabolite-driven interspecific interactions underestimate community biomass \parencite{Guex} and that generalized Lotka-Volterra models cannot reproduce
observed microbial growth patterns, see, e.g. \parencite{momeni,mustri}.
Marsland et al. \parencite{marsland19}
introduced the so-called Microbial Consumer-Resource Model (MiCRM), which explicitly incorporates cross-feeding.
In this model only a fraction of the energy received by a species will be used for its own growth, while the rest will be transformed into metabolites potentially useful for other species.\\

Interesting properties of the MiCRM have been established using methods from statistical physics \parencite{mehta2021cross}.
In this work, we  study the stability of a slow-fast approximation of the MiCRM, and then turn to a stochastic version of this approximation using Markov chains. The paper is organized as follows. Section 2 recalls the basic facts of the MacArthur model and introduces the MiCRM for $M$ resources or metabolites and $S$ consumer species (i.e. the microbial taxa). Section 3 analyzes resource dynamics under the slow-fast approximation, proving equilibrium stability. We provide a closed form 
formula for the fast resource equilibrium under uniform cross-feeding,
 where the energy excess due to metabolite consumption is redistributed uniformly among possible metabolites. Section 4 studies the feasibility, persistence and stability of the slow species o.d.e..
 Such questions have been studied previously for Lotka-Volterra predator-prey systems using random matrix theory, see, e.g. \parencite{stone,dougoud,clegg,mehta2021cross,jamal1,jamal2}, indicating that mass extinctions occur generically  as $S\to\infty$ when the parameters defining the system are chosen at random.
 Concerning  the MiCRM,
we show  the  almost sure non-persistence for the slow consumer o.d.e. when
 $M<S$ and
 the interaction parameters of the model are chosen at random with continuous densities.
  This suggests mass extinctions of consumers that lead to a number of surviving consumer species $S^*$ with $S^*<M$, which is a proxy of the competitive exclusion principle. 
 Using tools from statistical mechanics, the authors of \parencite{mehta2021cross} suggested a species packing upper bound given by $S^*/M\le 1/2$.
 We then consider the stability of  the slow consumer species o.d.e. for steady states with $S^*<M$ and show that such equilibria are generically stable. Exploiting the fact that the fast resource o.d.e. is the mass-action kinetics of a first order chemical reaction network (CRN) of zero deficiency, we propose in Section 5 a stochastic version of the MiCRM having fast Poisson steady state. We then study the stochastic properties of a plausible slow-fast time continuous time Markov chain associated  to the MiCRM.

\section{Preliminaries on Consumer-Resource models \label{Preliminaries}
}
\subsection{Notations}
For any integers $m$ and $n$, $\mathbb{Z}_{\geq 0}^m$ represents the m-dimensional lattice with non-negative integer, $\mathbb{R}^m_{\geq 0}$ is the subset of $\mathbb{R}^m$ whose coordinates are all non-negative and $\mathbb{R}^m_{> 0}$ is the subset of $\mathbb{R}^m$ whose coordinates are all positive.
The set of $n \times m$ matrices with real entries is written $\mathbb{R}^{n\times m}$. The transpose of a matrix $A$ or a vector $x$ is denoted $A^T$ and $x^T$. Let $\vec{0} = (0,...,0)^T \in \mathbb{R}^n$ correspond to the vector with only zero components, while $\vec{1} = (1,...,1)^T \in \mathbb{R}^n$ refers to the vector with only ones. For a column vector $x = (x_1,...,x_d)^T \in \mathbb{R}^d$, or equivalently for a line vector $x = (x_1,...,x_d) \in \mathbb{R}^{1 \times d}$, we denote by $\diag(x_\alpha)_{\alpha \in \{1,...,d\}}$, the diagonal matrix of $\mathbb{R}^{d\times d}$ with diagonal terms $x_1,...,x_d$. We set $(x)_i  \equiv x_i$ for any $i \in \{1,..d\}$. Lastly for two integers $i$ and $j$, $\delta_{i,j}$ is the dirac function with value $1$ if $i=j$ and $0$ otherwise.

\subsection{MacArthur consumer-resource model (MCRM)\label{subsec: MacArthur}}

MacArthur developed seminal consumer-resource models which are at the heart of our investigations 
\parencite{MacArthur69,MacArthur70}.  Let $R_\alpha$, $\alpha=1,\ldots, M$ be the abundances of byproducts or waste species and let $y_i$,
$i=1,\ldots, S$, be the abundances of the consumers. The following set of consumer-resource ordinary differential equations (o.d.e.) gives the rate of growth of
species and resources
\begin{equation*}\label{eq: MacArthurBiomass}
\frac{dy_i}{dt}=g_i y_i \Big(\sum_{\alpha=1}^M c_{i\alpha}w_\alpha R_\alpha -m_i\Big)
\end{equation*}
\begin{equation*}\label{eq: MacArthurResource}
\frac{dR_\alpha}{dt}= \frac{r_\alpha}{K_\alpha}(K_\alpha-R_\alpha)R_\alpha -\sum_{i=1}^S c_{i\alpha}y_iR_\alpha                               
\end{equation*}
where $w_\alpha$ is the weight of item of resource $R_\alpha$ in gram, $g_i$ is the conversion factor from energy uptake to growth rate,
$c_{i\alpha}$ is the rate at which an individual of species $i$ of concentration $y_i$ encounters and uptakes an item
of resource $R_\alpha$ per unit of time, $K_\alpha$ is is the carrying capacity of the habitat for resource $R_\alpha$, $r_\alpha$ is the intrinsic rate of natural increase and $m_i$ is the mortality rate of species $i$.

\subsubsection{Slow-fast approximation\label{subsubsec: McArthurSlowFast}}

To reduce the dimensionality of the system, MacArthur \parencite{MacArthur69} assumed that the resources go much faster to extinction than the species, i.e. he assumed a quasi-equilibrium by setting $dR_\alpha/dt =0$, so that
\begin{equation*}\label{McArthurQuasi}
R_\alpha =K_\alpha-\sum_j c_{j\alpha} \frac{K_\alpha}{r_\alpha}y_j,
\end{equation*}
which is then plugged into the consumer equation to arrive at the competition equation
\begin{equation}\label{eq: Competition}
\frac{dy_i}{dt}=\Big({\mathcal K}_i-\sum_{j=1}^n a_{ij}y_j 
\Big)y_i,
\end{equation}
with
$${\mathcal K}_i = \sum_\alpha c_{i\alpha}w_\alpha K_\alpha -m_i \hbox{  and  }a_{ij} =\sum_\alpha c_{i\alpha}c_{j\alpha} \frac{w_\alpha K_\alpha}{r_\alpha}.$$
\begin{figure}
    \centering
    \includegraphics[width=0.7\linewidth]{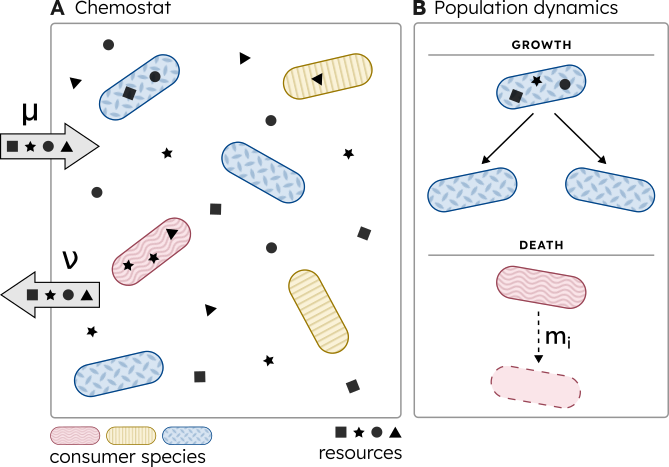}
    \caption{MacArthur model with externally supplied resources. Left panel shows the chemostat constituted of consumers species and resources. Right panel depicts growth and death of a consumer species.}
    \label{fig:sanscross-feeding}
\end{figure}

Equation (\ref{eq: Competition}) is of Lotka-Volterra type and has been widely studied in the literature. In particular, notice that, within this quasi-equilibrium where resources reach equilibrium much faster than consumers, the competition matrix is symmetric. MacArthur used this symmetry to get a Lyapunov function that shows that the orbit of  the o.d.e. reaches the equilibrium point.

When the resources are externally supplied as in Figure \ref{fig:sanscross-feeding}, MacArthur model becomes 
\begin{equation*}
\frac{dy_i}{dt}=y_i \Big(\sum_{\alpha=1}^M c_{i\alpha} R_\alpha -m_i\Big)
\end{equation*}
\begin{equation*}
\frac{dR_\alpha}{dt}= \mu_\alpha - \nu_\alpha R_\alpha -\sum_{i=1}^S c_{i\alpha}y_iR_\alpha,                               
\end{equation*}
for production and degradation rates $\mu_\alpha$ and $\nu_\alpha$.
The next Section \ref{s.MiCRM} generalizes this model by considering metabolite exchange through cross-feeding mechanism.

\subsection{The Microbial consumer-resource model (MiCRM) }\label{s.MiCRM}
Metabolic exchanges  are known to play a major role in microbial ecosystems, but are not considered in the MacArthur model.
Marsland et al. \parencite{marsland19} introduced cross-feeding, so that species not only compete through resources consumption, but also cooperate by exchanging metabolites. The resulting model was called the Microbial consumer-resource Model (MiCRM), and has been applied sucessfully to fit to experimental data, see \parencite{marsland19}, or \parencite{Guex} for an equivalent model based on chemical reaction networks.
The MiCRM model of
 \parencite{mehta2021cross,Marsland2019}
assumes that a fraction $0\leq l \leq 1$ of the energy is released back into the environment as metabolic byproducts. The remaining fraction is used as usual for species growth, see e.g. Figure \ref{fig:cross-feeding}. The model is given by the o.d.e. system
\begin{equation}\label{dynamic_species}
\frac{dy_i}{dt} = g_iG_i(R,y)=g_iy_i \Big(\sum_{\alpha} (1-l_{\alpha})c_{i\alpha} R_{\alpha} -m_i\Big),
\end{equation}
\begin{equation}\label{dynamic_resources}
        \frac{dR_{\alpha}}{dt} = F_\alpha(R,y)=\mu_{\alpha} - \nu_{\alpha} R_{\alpha} - R_{\alpha}\sum_{i}  c_{i \alpha} y_i+ \sum_{i \beta} l_{\beta} D^i_{\alpha\beta} c_{i \beta} R_{\beta} y_i
\end{equation}

where $(D^i)^T $ is the stochastic cross-feeding matrix associated with species $i$, and where  $\mu_\alpha$ and $\nu_\alpha$ are the production and degradation rates associated to  resource $\alpha$. The matrix entry
 $D^i_{\alpha\beta}$ encodes the fraction of resource $\beta$ leaked back to the environment as  resource $\alpha$ when consumed by species $i$. 

  \begin{figure}[h]
      \centering
    \includegraphics[width=0.7\linewidth]{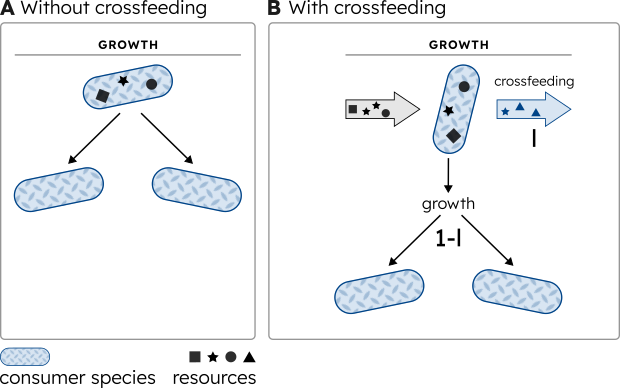}
     \caption{Utilization of resources by consumer species. Left panel shows that, in a model without cross-feeding, all the energy is used for growth. Right panel depicts the situation in the presence of cross-feeding. Only a fraction $1-l$ of the energy is used for growth while the rest is released back into the environment as metabolic byproducts.}
     \label{fig:cross-feeding}
 \end{figure}

Similarly as what is done in \parencite{marsland2020minimal}, we assume that due to the universality of metabolism, $D^i_{\alpha\beta}$ is the same for all species $i$, that is, we set $D^i \equiv D$, for all $i \in \{1,...,S\}$.

Some particular cases are of interest:
when $l = 0$, all the energy
for the microbes is used for growth, and the MiCRM model reduces to the
McArthur model. At the other extreme $l = 1$, all the energy is used for cross-feeding. 

\subsubsection*{Random interaction and metabolite exchange matrices}

Most natural microbial communities exhibit high microbial diversity with a very large number $S$ of  species.
Experimental studies, however, mostly focus on low diversity microbiota.
Even for such rather limited population diversity, the knowledge on interaction and
metabolite exchange coefficient $c_{i\alpha}$ and $D_{\alpha\beta}$
is very scarse. To overcome this fundamental issue, and to get information on
structural properties of the MiCRM model, 
most studies suppose that
 both the interaction matrix
 $c \in \mathbb{R}^{S \times M}$ and the metabolite exchange matrix $D\in \mathbb{R}^{M \times M}$ are randomly generated,  
 see \parencite{Marsland2019,marsland2020minimal,marsland2020community,mehta2021cross}.
 In these studies, the entries of the random matrix $c_{i\alpha}$ are drawn independently from a distribution with scaled mean and variance, for which $c_{i\alpha} >0$. A usual choice is a gaussian distribution truncated in $0$.
The cross-feeding matrix $D$ has to be drawn such that it is a column-stochastic matrix. In these  studies, the set of metabolites is divided into several classes,  where intra-classes and inter-classes cross-feeding interactions vary among classes. A Dirichlet distribution with shape parameters scaling as $M$, is then well adapted for sampling $D$.\\

\subsubsection*{Random interaction and uniform cross-feeding}

Some of the results of this work assume a non-negative random interaction matrix $c$ with i.i.d. entries and random i.i.d. non-negative random variables $m_i$.
Structural results that are obtained in such random frameworks are clearly indicated.
We, however, mostly assume uniform cross-feeding by setting
$D = \frac{1}{M} \vec{1} \vec{1}^T$, where $\vec{1}$ is the column vector composed of ones.

We proceed as in the McArthur's model given in Section \ref{subsubsec: McArthurSlowFast}
where slow-fast approximation is considered, with fast ressource and slow species dynamics.

\section{Resource dynamics under slow-fast approximation\label{ConsumerFast}}

We will see in the following Section \ref{MatrixFormulation}, that the o.d.e. system (\ref{dynamic_species},\ref{dynamic_resources}) is of the form
$$
    \frac{dR}{dt} = F(R,y),\ \ 
    \frac{dy}{dt} = g G(R,y),$$ 
where $F(R,y) = \mu - RB_l(y)$ and $G(R,y)= \diag(y_i)_{i \in \{1,...,S\}} \left ( (1-l)C^T R - m \right)$.

When the efficiency proportionality constant $g_i$ is often assumed to be very small i.e. $0 < g \ll 1$, see e.g. \parencite{chesson1990macarthur,poggiale2020analysis}, the system exhibits two different timescales, and is called a \textbf{slow-fast system}, see Section \ref{subsubsec: McArthurSlowFast}.
The mathematical properties of such systems are given by a Theorem of Tykhonov, see, e.g., \parencite{lobry}. 
The slow-fast approximation is given as a pair of o.d.e.:
\begin{itemize}
    \item The degenerate or slow system:
    \begin{equation}\label{degenerate}
    \frac{dy}{dt} = g G(R^*(y),y),
    \end{equation}
where the equilibrium $R^*(y)$ is the solution of the equation $F(R^*(y),y)=0$, 
\item and the fast system 
\begin{equation*}\label{fastsystem}
    \frac{dR}{dt} = F(R,y),
\end{equation*}
where $y$ is treated as a constant.
\end{itemize}
We  show in Section \ref{stability_fast} that the equilibrium $R^*(y)$ is a stable solution of the degenerate system. Tikhonov's Theorem then ensures that the solution $y(t)$ of the system (\ref{dynamic_species},\ref{dynamic_resources}) converges toward the solution of the degenerate system (\ref{degenerate})
as  $g \to 0$.

Hence, as in the MacArthur model, we  consider that the resources will reach equilibrium much faster than the species, i.e. we assume quasi-equilibrium by setting $dR_{\alpha}/dt = 0$. 
In what follows, species abundances  will be fixed to some positive values 
$y_i\in\R_{ > 0}$  for $i \in \{1.\cdots,S\}$.
We first show  that the fast system   has a unique asymptotically stable equilibrium $R^*(y)$, $\forall y \in\R_{ >0}^S$.

\subsection{Matrix reformulation of the fast system\label{MatrixFormulation}}
Let rewrite the system in matrix form in order to use some classic results and simplify the computations.
We consider the line vectors $R=(R_1,\ldots,R_M)$, $\nu=(\nu_1,\ldots,\nu_M)$ and $\mu=(\mu_1,\ldots,\mu_M)$. For $0<l\leq 1$, set
\begin{equation}\label{Laplacienl}
L_l = lP- I_M,
\end{equation}
where $P=D^T$ is the stochastic metabolite exchange matrix.
Note that $-L_1$ is a discrete Laplacian matrix, and $-L_l$ is a discrete sub-Laplacian matrix, for $0<l<1$.  
For any fixed state of the slow system $y=(y_1,...,y_S)^T$ and any resource $\alpha$, the global uptake rate is defined by
$$x_{\alpha}(y) =\sum_i y_i c_{i\alpha} > 0.$$
Let
 $$ \Gamma_l(y) = \diag\left(x_\alpha(y)\right)_{\alpha \in \{1,...,M\}}L_l\,$$
 for  $0<l\leq 1$.
The  evolution of resources abundances is governed by equation (\ref{dynamic_resources}), which can be  rewritten as a function of $\Gamma_l(y)$ 
\begin{equation}\label{Gen2B}
\frac{dR_\alpha}{dt} = \mu_\alpha -\nu_\alpha R_\alpha +(R\Gamma_l(y))_\alpha,
\end{equation}
or equivalently
\begin{equation*}\label{resources_matricial}
    \frac{dR}{dt} = \mu - RB_l(y),
\end{equation*} 
where $B_l(y) = \diag(\nu_\alpha)_{\alpha \in \{1,...,M\}}-\Gamma_l(y) $.
By setting $dR/dt=0$ at quasi equilibrium, we are looking for equilibrium $R^*(y)$ solving
$\mu - R^*(y)B_l(y) = 0$.

\subsection{Stability of the fast system }\label{stability_fast}
An equilibrium point $R^*$ is linearly stable when all eigenvalues of the Jacobian evaluated at this point have negative real parts, and  is linearly unstable otherwise.
In this section, we will assume that all species have the same degradation rate i.e. $\nu_\alpha \equiv \nu$, $\forall \alpha \in {1,...,M}$.
The Jacobian matrix  $J(y) = -B_l(y)$ of this system  does not depend on the equilibrium of the resources $R^*$.  The spectral bound of a matrix $A \in \mathbb{R}^{n\times n}$ is defined by
 $$s\left(A\right) = \{\max{{\rm Re}(\lambda), Ax = \lambda x \text{ for some } x \in \mathbb{C}^n}\},$$
 For the Jacobian matrix, we have  $s\left(J(y)\right) = s\left(\Gamma_l(y)\right) - \nu$.
We use results of \parencite{chen2022two},\parencite{chen2023impact} to show that the slow system has a unique asymptotically stable equilibrium $R^*(y)$, 
$\forall y\in\R_{ >0}^S$, by proving that $s(\Gamma_l)\le 0.$
\begin{prop}
    The equilibrium point for the resources $R^*(y)$ is stable $\forall y \in\R_{ >0}^S$.
\end{prop}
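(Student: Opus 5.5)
Since the fast system $dR/dt=\mu-RB_l(y)$ is affine in $R$, its Jacobian $J(y)=-B_l(y)=\Gamma_l(y)-\nu I_M$ does not depend on $R$. So linear stability of $R^*(y)$ is the same as $s(J(y))<0$, which will also give global asymptotic stability and uniqueness. Because $s(J(y))=s(\Gamma_l(y))-\nu$ and $\nu>0$, it is enough to show $s(\Gamma_l(y))\le 0$ for every $y\in\R^S_{>0}$. The plan is to read this off from the sign structure of $\Gamma_l(y)=\diag(x_\alpha(y))\,(lP-I_M)$ with $x_\alpha(y)>0$, using that $P=D^T$ is row-stochastic because $D$ is column-stochastic.

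First I would note that $\Gamma_l(y)$ is a Metzler matrix. Its off-diagonal entries $x_\alpha l P_{\alpha\beta}$ are nonnegative, and its diagonal entries are $x_\alpha(lP_{\alpha\alpha}-1)\le 0$. Next I would compute the row sums:
\[
\sum_\beta \Gamma_l(y)_{\alpha\beta} = x_\alpha\Big(l\sum_\beta P_{\alpha\beta}-1\Big)=x_\alpha(l-1)\le 0 .
\]
Then I would apply Gershgorin's theorem by rows. Each eigenvalue lies in a disc centred at $x_\alpha(lP_{\alpha\alpha}-1)$ with radius $x_\alpha l(1-P_{\alpha\alpha})$. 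The rightmost point of that disc is $x_\alpha(l-1)\le 0$, so every eigenvalue has real part $\le 0$. This gives $s(\Gamma_l(y))\le 0$, hence $s(J(y))\le-\nu<0$, matching the results of \parencite{chen2022two,chen2023impact} quoted above. When $l<1$ the bound is even strict, $s(\Gamma_l)\le (l-1)\min_\alpha x_\alpha<0$. When $l=1$, the matrix $-L_1$ is a genuine Laplacian and $\Gamma_1(y)\vec 1=0$, so $s(\Gamma_1)=0$ exactly, but the term $-\nu$ still makes $J(y)$ stable.

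Last, because $J(y)=-B_l(y)$ is Hurwitz, $B_l(y)$ is invertible. Its unique equilibrium is therefore $R^*(y)=\mu B_l(y)^{-1}$. I would also check that this equilibrium lies in $\R^M_{\ge 0}$, since it must be biologically meaningful. The matrix $B_l(y)=\nu I-\Gamma_l(y)$ is a nonsingular M-matrix: it has the form $sI-N$ with $N=\Gamma_l+cI\ge 0$, and $s(N)<s=\nu+c$. Hence $B_l(y)^{-1}\ge 0$ entrywise, and $R^*(y)\ge 0$ whenever $\mu\ge 0$. Every trajectory of the linear fast system then converges exponentially to $R^*(y)$, which is the stability property needed for Tikhonov's theorem.

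The main obstacle is mild. It is to make sure Gershgorin is applied on the correct side, rows rather than columns. Left multiplication by $\diag(x_\alpha)$ scales rows, and $P$ is row-stochastic because $D$ is column-stochastic. Column sums of $\Gamma_l$ need not be nonpositive, since the $x_\alpha$ differ, so a column-wise argument could fail. If any doubt remained, I would instead use the similarity $\diag(x)^{-1/2}\Gamma_l\diag(x)^{1/2}$, or note that $\Gamma_l$ is Metzler with the positive vector $\vec 1$ satisfying $\Gamma_l\vec 1\le 0$, which gives $s(\Gamma_l)\le 0$ by Perron–Frobenius.
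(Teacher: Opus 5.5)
Your proof is correct, and it reaches the result by a different and more elementary route than the paper.

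\textbf{The paper's route.} The paper studies $l\mapsto s(\Gamma_l(y))$ as a function of the leakage parameter. It first uses an external derivative bound (Altenberg's Theorem~6). That bound alone only gives $s(\Gamma_l(y))\le 0$ for $l\le \min_\alpha x_\alpha(y)/\max_\alpha x_\alpha(y)$. To cover all $l$, the paper then takes the endpoint values $s(\Gamma_0(y))=-\min_\alpha x_\alpha(y)$ and $s(\Gamma_1(y))=0$ (the Laplacian case) and uses convexity of the spectral bound in $l$, citing Altenberg and Chen et al.

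\textbf{Your route.} You use only the Metzler sign pattern of $\Gamma_l(y)=\diag(x_\alpha(y))(lP-I_M)$ and the row-stochasticity of $P$. Row-wise Gershgorin then gives $s(\Gamma_l(y))\le -(1-l)\min_\alpha x_\alpha(y)$ directly, for every $l\in[0,1]$. Equivalently, Perron--Frobenius applies with the positive test vector $\vec 1$, since $\Gamma_l(y)\vec 1\le 0$. This is exactly the bound obtained by convex interpolation between the paper's two endpoints, so the derivative step becomes unnecessary.

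\textbf{What each approach buys.} Your argument is self-contained and avoids the spectral-convexity theorems. It also generalizes at no cost. Applying Gershgorin to $J(y)=\Gamma_l(y)-\diag(\nu_\alpha)$ puts every disc to the left of $x_\alpha(y)(l-1)-\nu_\alpha<0$. So you can drop the assumption $\nu_\alpha\equiv\nu$, which the paper needs for its identity $s(J)=s(\Gamma_l)-\nu$. Your M-matrix observation additionally gives $R^*(y)\ge 0$ entrywise, which the paper does not address in this proposition. The paper's framing, in turn, carries extra structural information, namely convexity of the spectral bound in $l$, but this is not needed for the statement itself.

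\textbf{A minor remark on your fallbacks.} The similarity by $\diag(x)^{\pm 1/2}$ does not in general keep row or column sums nonpositive when $P$ is not symmetric, so it would not replace the row argument. The Perron--Frobenius fallback is fine and is really the same argument restated.
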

\begin{proof}
Notice that
$\Gamma_l(y) = l  \diag \left(x_\alpha(y)\right)_{\alpha \in \{1,...,M\}}  P-\diag \left(x_\alpha(y)\right)_{\alpha \in \{1,...,M\}} $.
Theorem 6 of \parencite{altenberg} shows that
$$\frac{{\rm d}}{{\rm dl}}s(\Gamma_l(y))\le s(\diag \left(x_\alpha(y)\right)_{\alpha \in \{1,...,M\}}P).$$
Perron-Frobenius yields that 
$s(\diag \left(x_\alpha(y)\right)_{\alpha \in \{1,...,M\}}P)\le \max_\alpha x_\alpha(y)$.
Moreover
$s(\Gamma_0(y))=-\min_\alpha x_\alpha(y)$,
so that
$s(\Gamma_l(y))\le -\min_\alpha x_\alpha(y) + l \max_\alpha x_\alpha(y)$.
$s(\Gamma_l(y))\le 0$ is thus satisfied when
$$l\le \frac{\min_\alpha x_\alpha(y)}{\max_\alpha x_\alpha(y)}\le 1.$$
On the other hand, $\Gamma_1(y) = \diag \left(x_\alpha(y)\right)_{\alpha \in \{1,...,M\}}(P-I_M)$, so that $-\Gamma_1(y)$ is a discrete Laplacian matrix, with
$s(\Gamma_1(y))=0$. Results of \parencite{altenberg} (see also \parencite{chen2022two}, Theorem 2.2) show that $s(\Gamma_l(y))$ is a convex function of $0<l<1$. We thus conclude
that $s(\Gamma_l(y))\le  0$, $\forall 0<l<1$, which in turn yields that $s(J(y))<0$ since $\nu >0$.
\end{proof}


\subsection{Closed form solution for uniform cross-feeding\label{ClosedForm}}

Let $ y \in\R_{ >0}^S$. To find an equilibrium $R^*(y)$, we set $dR/dt=0$ at quasi equilibrium, and since $B_l(y)$ is invertible for $0\leq l <1$, we find:
$$R^*(y) = \mu B_l(y)^{-1}$$

The following result provides a closed-formed formula for the equilibrium $R^*(y)$ under \textit{uniform cross-feeding. }
We will work in this setting for the rest of the section.


\begin{prop}
    For any $ y \in\R_{ >0}^S$ the equilibrium $R^*_\alpha(y)$ of the resources dynamics (\ref{dynamic_resources}) under uniform cross-feeding is given by
\begin{align*}\label{general_equilibrium_resources}
    R^*_\alpha (y) &= \frac{l\sum_\beta \frac{x_\beta(y) \mu_\beta}{x_\beta(y) + \nu_\beta}}{M(x_\alpha(y)+\nu_\alpha)\left(1 - \frac{l}{M}\sum_\beta \frac{x_\beta(y)}{x_\beta(y) + \nu_\beta}\right)}+\frac{\mu_\alpha}{x_\alpha(y) + \nu_\alpha}.
\end{align*} 
When $\mu_\alpha \equiv \mu $ and $\nu_\alpha \equiv \nu$, we have
\begin{equation}\label{equilibrium_resources}
R^*_\alpha (y) = \frac{\mu}{A(y)}\frac{1}{x_\alpha(y) + \nu},
\end{equation}
where $A(y)=1-l +(\nu l/M)\sum_\alpha \frac{1}{x_\alpha(y)+\nu} > 0$.

\end{prop}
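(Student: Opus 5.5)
Under uniform cross-feeding $D=\frac1M\vec 1\vec 1^T$, so $P=D^T=\frac1M\vec 1\vec 1^T$. The plan is to write the equilibrium equation $\mu = R^*B_l(y)$ componentwise and exploit the rank-one structure of the cross-feeding term. With $x_\alpha=x_\alpha(y)$, the equilibrium condition from (\ref{Gen2B}) reads, for each $\alpha$,
\begin{equation*}
0=\mu_\alpha-(\nu_\alpha+x_\alpha)R_\alpha+\frac{l}{M}\sum_\beta x_\beta R_\beta .
\end{equation*}
The cross-feeding contribution $\frac lM\sum_\beta x_\beta R_\beta$ does not depend on $\alpha$. Introduce the scalar unknown $T=\sum_\beta x_\beta R_\beta$. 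Then
\begin{equation*}
R_\alpha=\frac{\mu_\alpha+\frac lM T}{x_\alpha+\nu_\alpha}.
\end{equation*}

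Next, determine $T$ self-consistently. Multiply the last identity by $x_\alpha$ and sum over $\alpha$ to get
\begin{equation*}
T=\sum_\beta\frac{x_\beta\mu_\beta}{x_\beta+\nu_\beta}+\frac lM T\sum_\beta\frac{x_\beta}{x_\beta+\nu_\beta}.
\end{equation*}
This is a scalar linear equation, so
\begin{equation*}
T=\frac{\sum_\beta \frac{x_\beta\mu_\beta}{x_\beta+\nu_\beta}}{1-\frac lM\sum_\beta\frac{x_\beta}{x_\beta+\nu_\beta}}.
\end{equation*}
The denominator is positive because each ratio $\frac{x_\beta}{x_\beta+\nu_\beta}<1$ (as $\nu_\beta>0$) and $l\le1$, so the sum is $<M$. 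Substituting $T$ back gives exactly the general formula in the statement. Uniqueness follows from the invertibility of $B_l(y)$ noted just before the proposition. Alternatively, it follows directly, since the reduction to $T$ is reversible: any solution determines $T$, and $T$ is forced.

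For the special case $\mu_\alpha\equiv\mu$, $\nu_\alpha\equiv\nu$, simplify using $\frac{x_\beta}{x_\beta+\nu}=1-\frac{\nu}{x_\beta+\nu}$. This gives
\begin{equation*}
\sum_\beta\frac{x_\beta}{x_\beta+\nu}=M-\nu\Sigma, \qquad \Sigma:=\sum_\beta\frac1{x_\beta+\nu}.
\end{equation*}
Hence the denominator becomes $1-l+\frac{l\nu}{M}\Sigma=A(y)$, and the numerator of $T$ is $\mu(M-\nu\Sigma)$. Then
\begin{equation*}
\mu+\frac lM T=\frac{\mu\bigl(A(y)+\frac lM(M-\nu\Sigma)\bigr)}{A(y)}=\frac{\mu}{A(y)},
\end{equation*}
because $A(y)+l-\frac{l\nu}{M}\Sigma=1$. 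This yields (\ref{equilibrium_resources}). Finally, $A(y)>0$ is clear: $1-l\ge0$ and the second term is strictly positive since $l>0$ and $\nu>0$. If $l=0$, then $A=1>0$ trivially.

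There is no real obstacle here. The only delicate point is checking that the scalar denominator $1-\frac lM\sum_\beta \frac{x_\beta}{x_\beta+\nu_\beta}$ does not vanish, including at $l=1$ where $B_l$ invertibility was only asserted for $l<1$. The strict inequality $\frac{x_\beta}{x_\beta+\nu_\beta}<1$ settles this for all $0\le l\le 1$.
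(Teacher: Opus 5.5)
Your argument is correct. It exploits the same rank-one structure as the paper but carries it out differently. The paper writes $B_l(y)=\diag(\nu_\alpha+x_\alpha(y))_{\alpha}+u(y)\vec{1}^T$ with $u(y)=-\frac{l}{M}(x_\alpha(y))_\alpha$. It then inverts this matrix with the Sherman--Morrison formula and reads off $R^*(y)=\mu B_l(y)^{-1}$; the special case $\mu_\alpha\equiv\mu$, $\nu_\alpha\equiv\nu$ is stated without computation.

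You never form the inverse. The scalar unknown $T=\sum_\beta x_\beta R_\beta$ absorbs the rank-one term directly. Your coefficient $1-\frac{l}{M}\sum_\beta\frac{x_\beta}{x_\beta+\nu_\beta}$ is exactly the Sherman--Morrison denominator $1+\vec{1}^T\diag(\nu_\alpha+x_\alpha)^{-1}u$, so the underlying mechanism is the same, but your route gains several things:
\begin{itemize}
\item existence and uniqueness come out of the reversible scalar reduction, with no a priori invertibility needed;
\item the endpoint $l=1$, which the paper's remark that $B_l(y)$ is invertible for $0\le l<1$ leaves aside, is covered;
\item the simplification to $\mu/(A(y)(x_\alpha+\nu))$ is actually carried out;
\item it avoids a slip in the paper's displayed entries of $B_l(y)^{-1}$.
\end{itemize}
On the last point, the diagonal contribution should be $\delta_{\alpha,\beta}/(x_\alpha+\nu_\alpha)$, whereas the paper's display also divides it by $1-\frac{l}{M}\sum_\gamma\frac{x_\gamma}{x_\gamma+\nu_\gamma}$. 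The paper's final formula for $R^*_\alpha(y)$ is nonetheless correct. In exchange, the paper's route produces the full matrix $B_l(y)^{-1}$ as a by-product.
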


\begin{proof}
In the case of uniform cross-feeding the sub-Laplacian matrix describing the transitions between resources is then given by:
 $$ - (\Gamma_l(y))_{\alpha,\alpha} = \frac{M-l}{M} x_\alpha(y),  \quad (- \Gamma_l(y))_{\alpha,\beta} = -l\frac{x_\alpha(y)}{M}, \quad \forall \alpha \neq \beta.$$ 
     $B_l(y)$ can thus be written as a rank-one perturbation of a diagonal matrix 
    $$B_l(y) = \diag(\nu_\alpha)_{\alpha \in \{1,...,M\}} - \Gamma_l(y) = \diag(\nu_\alpha+x_\alpha(y))_{\alpha \in \{1,...,M\}} + u(y)\vec{1}^T,$$
    where $ u(y) = \left(-\frac{l}{M} x_\alpha(y)\right)_{1 \leq \alpha \leq M} \in \mathbb{R}^{M\times 1}$.

We apply Sherman-Morrison formula \parencite{sherman1950adjustment} to find the inverse of $B_l(y)$, which gives:
\begin{equation*}
(B_l(y)^{-1})_{\alpha,\beta} = \frac{lx_\alpha(y)+\delta_{\alpha,\beta} M (x_\alpha(y)+\nu_\alpha)}{M(x_\alpha(y)+\nu_\alpha)(x_\beta(y)+\nu_\beta)\left(1 - \frac{l}{M}\sum_\gamma\frac{x_\gamma(y)}{x_\gamma(y) + \nu}\right)}.
\end{equation*}
The equilibrium vector follows easily
\begin{align*} 
    R^*_\alpha (y) &= (\mu_1, \mu_2, ...,\mu_M)B_l(y)^{-1}  \\ 
    &= \frac{l\sum_\beta \frac{x_\beta(y) \mu_\beta}{x_\beta(y) + \nu}}{M(x_\alpha(y)+\nu_\alpha)\left(1 - \frac{l}{M}\sum_\beta \frac{x_\beta(y)}{x_\beta(y) + \nu_{\beta}}\right)}+\frac{\mu_\alpha}{x_\alpha(y) + \nu_\alpha}
\end{align*}
When $\mu_\alpha \equiv \mu $ and $\nu_\alpha \equiv \nu$, we get:
\begin{equation*}
    R^*_\alpha (y)= \frac{\mu}{A(y)}\frac{1}{x_\alpha(y) + \nu}
\end{equation*}
where $A(y)=1-l +(\nu l/M)\sum_\alpha\frac{1}{x_\alpha(y) +\nu}$.

\end{proof}

\section{Slow  species dynamics for uniform cross-feeding }\label{section_slow}
We follow \parencite{MacArthur69} by looking at the slow species system by plugging the expression of the resources equilibrium into the species equation (see (\ref{degenerate}))
to arrive to the slow o.d.e. 
\begin{equation}\label{slow_dynamic_species}
\frac{dy_i}{dt} = g_iG_i(y)= g_iy_i \Big(\sum_{\alpha} (1-l_{\alpha})c_{i\alpha} R^*_{\alpha}(y) -m_i\Big).
\end{equation}
We are interested in the  feasibility and linear stability of this system. A direct computation shows that
\begin{equation}\label{Integral}
y_i(t)=g_iy_i(0)\exp(\int_0^t \sum_{\alpha} (1-l_{\alpha})c_{i\alpha} R^*_{\alpha}(y(s)){\rm d}s -m_i  t),
\end{equation}
showing that the orbits of the slow species dynamics remain positive at any time $t$ when
$y_i(0) >0$, $\forall i\in \{1,...,S\}$.\\

 We first consider {\bf feasibility}, that is, we first focus on the existence of positive equilibria $y^*>0 $ with
$G_i(y^*)=0$ $\forall i \in \{1,...,S\}$.
Assume that $\mu_\alpha \equiv \mu $, $l_\alpha \equiv l$ and $\nu_\alpha \equiv \nu$ for all $\alpha \in \{1,...,M\}$. Plugging (\ref{equilibrium_resources})
in (\ref{slow_dynamic_species}) when 
$\frac{dy_i}{dt}=0$ one arrives at the linear system
\begin{equation}\label{LinearSystemSlow}
CZ(y^*) = \frac{A(y^*)}{\mu (1-l)} m,
\end{equation}
where 
 $Z(y^*)$ is the $M$-dimensional vector of entries
given by $z_\alpha(y^*)=1/(x_\alpha(y^*) +\nu)$, $\alpha=1,\cdots, M$. 
Notice that we are looking for positive equilibria with $y_i >0$, so that
$x_\alpha(y) =\sum_i c_{i\alpha}y_i >0$. We thus look for positive solutions $Z(y^*)$ of the linear equation (\ref{LinearSystemSlow}). If no positive solution exists, then  the slow dynamical systems does not have a positive equilibrium, suggesting that some  slow species become extinct.
On the contrary, if (\ref{LinearSystemSlow}) has a positive solution $Z^*(y^*) >0$, one must then check the existence of positive solution $y^* >0$ to the linear system
\begin{equation*}
    \frac{1}{Z^*(y^*)}-\nu (1,\cdots,1)^T = C^T y^*,
\end{equation*}

where $\frac{1}{Z^*(y^*)}$ is the vector of entries $1/z^*_\alpha(y^*)$.\\

Assume that both the interaction matrix $C$ and $m$ are chosen independently at random with positive entries $c_{i\alpha}$ and $m_i$ having continuous probability densities.
When $\lambda = M/S <1$, the dimension of the column space of the matrix $C$ is at most $M< S$. Hence, the linear system 
(\ref{LinearSystemSlow}) will have no solution almost surely (a.s.) when the random vector 
$m$ of $\R_{ >0}^S$ has a density. The probability of feasibility vanishes then under these conditions, that is
$$\mathbb{P}(\exists \ y^*>0 \hbox{ with }G(y^*)=0)=0,$$
where the vector field $G$ is given by (\ref{slow_dynamic_species}).

When e.g. $M=S$ we can rewrite (\ref{LinearSystemSlow}) as $C \bar Z=m$, where $\bar Z= \mu(1-l)Z(y^*)/A(y^*)$.
Results of \parencite{stone,dougoud} show that for random positive matrices $C$ of i.i.d. entries uniformly distributed on a positive interval, the probability of feasibility
$\mathbb{P}( \bar Z_\alpha(y^*) >0,\ \forall \alpha)\approx p^M$ for a parameter $p$ with $0<p<1$, and thus converges exponentially fast toward 0 as $M=S\to\infty.$

Results of \parencite{mehta2021cross} suggest that the number of surviving species $S^*$ should satisfy $S^* \leq M/2 $ in the large-system limit. The MiCRM is thus unlikely to admit a feasible equilibrium when $M/2 < S < M$. 
\subsection{Non-persistence of the slow species  dynamics when $\lambda =M/S<1$\label{subsection_feasibility}}

We focus on persistence of the slow dynamics (\ref{slow_dynamic_species}). The flow induced by the vector field $F_i(y)=g_i G_i(y)$, $i=1,\ldots,S$ is the mapping $\Phi:\ \R\ {\rm x}\ \R^S\to\R^S$, $(t,x)\mapsto \Phi_t(x)$ such that
$t\mapsto \Phi_t(x)$ is the solution of (\ref{slow_dynamic_species}) with initial condition $x\in\R^S$. The positive  trajectory is the set  $\gamma^+(x) = \{\Phi_t(x);\ t\ge 0\}$, and the {\bf omega limit set} of $x$, denoted by $\omega(x)$ is the set of points $p$ of $\R^S$ such that $p=\lim_{n\to\infty}\Phi_{t_n}(x)$, for some sequence $t_n\to\infty$.
\begin{definition}\label{persistent}
  The system is persistent if all the species coexist, that is, if for every bounded trajectory $\gamma^+(x)$
  $$x\in\R_{ >0}^S \Rightarrow \omega(x)\subset\R_{ >0}^S.  $$
\end{definition}

The previous results show that
$$m\not\in K(\R_{ >0}^S),\ a.s.,$$
where the vector field $K$ is given by 
$$K_i(y)=
\sum_{\alpha} (1-l_{\alpha})c_{i\alpha} R^*_{\alpha}(y),$$
$i=1,\cdots,S$ and $m=(m_1,\cdots,m_S)^T\in \R_{ >0}^S$.
We next use the fact that $K(\R_{ >0}^S)$ is  convex to show that the orbits 
of (\ref{slow_dynamic_species}) issued from $y(0)\in \R_{ >0}^S$
leave ultimately $\R_{ >0}^S$ suggesting that one can focus on equilibria
$y^*$ such that $y_j^*\ge 0$ for all $j \in \{1,..,S\}$ with 
$y_i^* =0$  for at least one $i\in\{1,\cdots,S\}$ (see Definition \ref{persistent} and Proposition \ref{nopersistence}).
\begin{lemma}\label{Convexity}
Assume that $\mu_\alpha \equiv \mu $ and $\nu_\alpha \equiv \nu$.  Then
$K(\R_{ >0}^S)$ is  convex.
\end{lemma}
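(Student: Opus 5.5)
The plan is to factor $K$ into maps that are known to preserve convexity, and to isolate the one map that genuinely needs an argument. With $\mu_\alpha\equiv\mu$, $\nu_\alpha\equiv\nu$, $l_\alpha\equiv l$, formula (\ref{equilibrium_resources}) gives
\[
K(y)=\mu(1-l)\,C\,W(y),\qquad W(y)=\frac{Z(y)}{A(y)},\qquad A(y)=1-l+\frac{\nu l}{M}\vec{1}^TZ(y),
\]
where $Z(y)_\alpha=1/(x_\alpha(y)+\nu)$ and $x(y)=C^Ty$. Hence $K(\R_{>0}^S)=\mu(1-l)\,C\,\mathcal W$, with $\mathcal W=\Psi(\mathcal Z)$ and $\Psi(z)=z/(1-l+\frac{\nu l}{M}\vec 1^Tz)$. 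Here
\[
\mathcal Z=\{\,z\in\R^M_{>0}:\ z_\alpha=1/(x_\alpha+\nu),\ x\in\mathcal X\,\},\qquad \mathcal X=C^T(\R^S_{>0}).
\]
Note that $\mathcal X$ is a convex cone contained in $\R^M_{>0}$.

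Since a linear image of a convex set is convex, it suffices to show that $\mathcal W$ is convex. The map $\Psi$ is linear-fractional, and its denominator is positive on all of $\R^M_{>0}$. Linear-fractional maps send segments to segments on any region where the denominator keeps its sign; explicitly, the image of $z^0,z^1$ under $\Psi$ has a chord that is the image of a reparametrised chord of $z^0,z^1$. So $\Psi$ maps convex subsets of $\R^M_{>0}$ to convex sets. The whole statement therefore reduces to proving that $\mathcal Z$ is convex.

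For $\mathcal Z$, I would describe the closure of $\mathcal X$ through its dual cone: $\bar{\mathcal X}=\{x:\ a^Tx\ge 0\ \ \forall a\in\mathcal X^*\}$. Working with the open interior, the passage between closed and open versions is routine. The substitution $x_\alpha=1/z_\alpha-\nu$, with $0<z_\alpha<1/\nu$, then turns $\mathcal Z$ into the intersection of the box $(0,1/\nu)^M$ with the sets
\[
H_a=\Bigl\{z:\ \sum_\alpha \frac{a_\alpha}{z_\alpha}\ \ge\ \nu\sum_\alpha a_\alpha\Bigr\},\qquad a\in\mathcal X^*.
\]
An intersection of convex sets is convex, so the task is to show that each $H_a$, restricted to the box, is convex.

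This is the step I expect to be the main obstacle, because $a$ may have entries of both signs, and then $\sum_\alpha a_\alpha/z_\alpha$ is neither convex nor concave.

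The two-dimensional case shows the mechanism. There, $a=(p,-q)$ with $p,q>0$. Solving the defining inequality of $H_a$ for $z_2$ gives
\[
z_2\ \ge\ \frac{qz_1}{p-\nu(p-q)z_1}.
\]
The right-hand side is convex in $z_1$ wherever its denominator is positive. So $H_a$ is the epigraph of a convex function and is convex, and this is exactly where $\nu>0$ and the box constraint enter.

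In general I would try to reproduce this. Split $a=a^+-a^-$ and use the fact that, along any segment in $z$, the term $t\mapsto \sum_\alpha a^+_\alpha/z_\alpha(t)$ is convex. I would then show that the constraint can be rewritten as a lower bound on a positive linear-fractional expression in the coordinates carrying $a^-$, which is a perspective-type, hence convex, constraint. If this direct route fails, the fallback is to use that the extreme rays of $\mathcal X^*$ are determined by the $M-1$-dimensional faces of the polyhedral cone $\mathcal X$. One would then check convexity face by face through the two-coordinate computation above, after a linear change of variables adapted to each face.
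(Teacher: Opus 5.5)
Your treatment of $C$ and $\Psi$ is correct, so everything rests on the convexity of $\mathcal Z=Z(\R^S_{>0})$. The paper's own proof has the same gap. It uses the same factorisation $R^*=f\circ Z$ and declares $Z(\R^S_{>0})$ convex by appeal to linear-fractional maps. But $Z$ is not linear-fractional, because each coordinate $1/(c_\alpha^Ty+\nu)$ has its own denominator. In fact $\mathcal Z$ is not convex in general, and your two-dimensional computation contains the error that hides this. For $g(z_1)=qz_1/(p-\nu(p-q)z_1)$ one finds
\[
g''(z_1)=\frac{2pq\nu(p-q)}{\bigl(p-\nu(p-q)z_1\bigr)^{3}},
\]
so $g$ is concave, not convex, when $p<q$. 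Such a vector $(p,-q)$ with $p<q$ lies in $\mathcal X^*$ as soon as every row of $C$ satisfies $c_{i2}<c_{i1}$. Doing the computation for both extreme rays of $\mathcal X^*$ shows that, for $M=2$, $\mathcal Z$ is convex if and only if $\min_i c_{i2}/c_{i1}\le 1\le \max_i c_{i2}/c_{i1}$.

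Here is a concrete counterexample. Take $M=2$, $S=3$, $\nu=1$, and rows $(2,1),(3,1),(4,1)$ for $C$. Then $\mathcal X=\{x_2>0,\ 2x_2<x_1<4x_2\}$, and $a=(1,-2)\in\mathcal X^*$ gives the concave lower boundary $z_2>2z_1/(1+z_1)$. The points $P=(0.11,0.2)$ and $Q=(0.66,0.8)$ lie in $\mathcal Z$, since their preimages $x_\alpha=1/z_\alpha-1$ have $x_1/x_2\approx 2.02$ and $2.06$. Their midpoint $(0.385,0.5)$ does not, since there $x_1/x_2\approx 1.60$.

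This failure transfers to the lemma itself. For $l<1$ and $C$ of rank $M$ (generic once $S\ge M$, so in particular in the regime $M<S$ where the lemma is used), $\Psi$ and $C$ are both injective and map segments onto segments. Hence $K(\R^S_{>0})$ is convex if and only if $\mathcal Z$ is, so your reduction is an equivalence. Concretely, with $l=0$, $\mu=1$, and $y_P,y_Q\in\R^3_{>0}$ such that $Z(y_P)=P$ and $Z(y_Q)=Q$, the midpoint of $K(y_P)$ and $K(y_Q)$ is $C(0.385,0.5)^T$. By injectivity of $C$ it is not in $K(\R^3_{>0})$. So the statement is false as written, and no refinement of your dual-cone or face-by-face argument can close the gap.

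The non-persistence proposition downstream only needs a hyperplane through $m$ that avoids $K(\R^S_{>0})$, and that requires no convexity. $K(\R^S_{>0})$ lies in the column space of $C$, a subspace of dimension at most $M<S$ that almost surely does not contain $m$. A suitable multiple of the component of $m$ orthogonal to that subspace therefore serves as the vector $u$.
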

\begin{proof}
We use the fact that convexity is preserved under linear-fractional transformations 
$f:\ \R^n \to \R^m$ of the form $f(x)= A x +b /(c^T x +d)$, for any $m\ {\rm x}\ n$ matrix $A$, $b\in\R^m$, $c\in\R^n$ and $d\in\R$.
We have seen in (\ref{equilibrium_resources}) that
$$R^*_\alpha (y) = \frac{1}{1-l +\frac{\nu l}{M} \sum_\beta \frac{1}{x_\beta(y) + \nu}} \frac{\mu}{x_\alpha(y) + \nu}.$$
Then $R^*_\alpha(y)=f_\alpha(Z(y))$ where
$$f_\alpha(z)=\frac{\mu z_\alpha}{1-l +\frac{\nu l}{M}\sum_\beta z_\beta},$$
and
$$Z_\alpha(y) = \frac{1}{c_\alpha^T y  +\nu},$$
where $c_\alpha$ is the corresponding column vector of the matrix $C$. Hence 
$Z(\R_{ >0}^S)$, $f_\alpha(Z(\R_{ >0}^S))$, $f(Z(\R_{ >0}^S))$ and $K(Z(\R_{ >0}^S))$ are convex.

\end{proof}

\begin{prop}\label{nopersistence}
Assume that $\mu_\alpha \equiv \mu $ and $\nu_\alpha \equiv \nu$. Suppose that $\lambda < 1$. 
Suppose that both the interaction matrix $C$ and $m$ are chosen independently at random with positive entries $c_{i\alpha}$ and $m_i$ having continuous probability densities.
Then the slow species dynamics (\ref{slow_dynamic_species}) is a.s. not persistent.
\end{prop}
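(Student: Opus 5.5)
The argument is a separation one, in the spirit of the classical proof of non-coexistence for Lotka--Volterra systems: a positive equilibrium is missing, and convexity of $K(\R_{>0}^S)$ (Lemma \ref{Convexity}) will give a linear function that is monotone along orbits.

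\textbf{Separation.} By the discussion preceding Lemma \ref{Convexity}, $m\notin K(\R_{>0}^S)$ almost surely. Since $K(\R_{>0}^S)$ is convex, the separating hyperplane theorem gives a nonzero vector $v\in\R^S$ with
\[
v^T\bigl(K(y)-m\bigr)\ge 0\quad\text{for all } y\in\R_{>0}^S .
\]
Strict separation would be better, and one may get it generically. Note that $K(\R_{>0}^S)$ lies in the image of the $M$-dimensional column space of $C$ (shifted appropriately), so it sits in a proper linear subspace when $M<S$. Hence one can take $v$ orthogonal to the range of $C$ restricted to that subspace, with $v^Tm\neq 0$ almost surely since $m$ has a density. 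Concretely, $K(y)=(1-l)C\,R^*(y)^T$ lies in $\mathrm{Im}(C)$, which has dimension at most $M<S$. I choose $v\in\mathrm{Im}(C)^\perp$ with $v^Tm<0$, flipping the sign if needed; $v^Tm\ne0$ a.s. Then
\[
v^T\bigl(K(y)-m\bigr)=-v^Tm=:\delta>0
\]
is constant, which is even simpler than a separation argument.

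\textbf{Lyapunov-type function.} Set $V(y)=\sum_i \frac{v_i}{g_i}\log y_i$ on $\R_{>0}^S$. Along orbits of (\ref{slow_dynamic_species}), which remain positive by (\ref{Integral}),
\[
\dot V=\sum_i v_i\bigl(K_i(y)-m_i\bigr)=\delta>0,
\]
so $V(y(t))=V(y(0))+\delta t\to\infty$.

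\textbf{Conclusion.} Take a bounded trajectory $\gamma^+(x)$ with $x\in\R_{>0}^S$, and suppose $\omega(x)\subset\R_{>0}^S$. Since $\omega(x)$ is compact and nonempty for bounded orbits, $\omega(x)$ is a compact subset of the open orthant. Along a sequence $t_n\to\infty$ with $y(t_n)\to p\in\omega(x)$, continuity gives $V(y(t_n))\to V(p)<\infty$. This contradicts $V(y(t_n))\to\infty$. Hence $\omega(x)\not\subset\R_{>0}^S$, so the system is not persistent, almost surely. If no bounded positive trajectory exists, the definition is vacuous. To handle that case, I would instead argue that some $y_i(t)$ with $v_i>0$ must be unbounded or, together with some other coordinate, tend to $0$. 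Boundedness of trajectories is not needed for the statement as defined, so I would point out this subtlety rather than resolve it.

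\textbf{Main obstacle.} The key step is getting a $\delta$ with a definite sign, i.e.\ ensuring $v^Tm\neq0$. This holds almost surely because $\mathrm{Im}(C)^\perp$ is nontrivial when $M<S$, and $m$ is independent of $C$ with a density, so $m$ falls in the hyperplane $\{v^Tm=0\}$ with probability zero. Only this measure-zero argument uses randomness. With this choice of $v$, convexity from Lemma \ref{Convexity} is not even needed; it would be needed only if one wanted a separation argument for general $m\notin K(\R_{>0}^S)$.
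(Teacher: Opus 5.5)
Your proof is correct, and it takes a different and more elementary route than the paper's. The paper uses Lemma~\ref{Convexity} to separate $m$ from $K(\R_{>0}^S)$ by a hyperplane through $m$. It then takes $V(y)=-u^T\ln y$, asserts $\dot V<0$ on the open orthant, and concludes with a LaSalle-type theorem that no compact limit set lies in $\R_{>0}^S$. You instead reuse the fact that already gives $m\notin K(\R_{>0}^S)$, namely $K(y)\in\mathrm{Im}(C)$ with $\dim\mathrm{Im}(C)\le M<S$. A vector $v\perp\mathrm{Im}(C)$ with $v^Tm<0$ then exists almost surely, so $\dot V\equiv\delta>0$. The contradiction between $V(y(t_n))=V(y(0))+\delta t_n\to\infty$ and $V(y(t_n))\to V(p)$ then needs no Lyapunov theorem at all.

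This buys more than brevity.
\begin{itemize}
\item You never need Lemma~\ref{Convexity}. Its proof treats the componentwise reciprocal $y\mapsto Z(y)$ as a linear-fractional map, which it is not, and the convexity can fail. For $M=2$ with all ratios $c_{i2}/c_{i1}>1$, the upper boundary of $Z(\R_{>0}^S)$ is the strictly convex arc $z_2=z_1/(a+\nu(1-a)z_1)$ with $a=\min_i c_{i2}/c_{i1}$. This happens with positive probability.
\item You get strict separation for free. Convexity alone only yields $u^T(g-m)\ge 0$, and $K(\R_{>0}^S)$ is not open.
\item Your weights $v_i/g_i$ correctly handle heterogeneous $g_i$. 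The paper's identity $\frac{d}{dt}V=-K(y)^Tu$ drops both $m$ and the $g_i$.
\item Your argument never uses $\mu_\alpha\equiv\mu$, $\nu_\alpha\equiv\nu$ or uniform cross-feeding; it works for any well-defined fast equilibrium $R^*(y)$.
\end{itemize}
The separation route would only pay off when infeasibility is not caused by dimension (e.g.\ $M\ge S$), and there it would require a correct convexity result.

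One point to fix: your remark that boundedness is not needed for the statement as defined is backwards. Under Definition~\ref{persistent}, if no positive trajectory were bounded, the system would be vacuously persistent. So proving non-persistence requires at least one bounded positive orbit. The paper's boundedness proposition supplies this. Summing the resource equilibrium equations gives $(1-l)\,y_i\sum_\alpha c_{i\alpha}R^*_\alpha(y)\le\sum_\alpha\mu_\alpha$, hence $\dot y_i\le g_i\bigl(\sum_\alpha\mu_\alpha-m_iy_i\bigr)$, and every positive orbit is bounded. Citing it, $\omega(x)$ is nonempty and compact, and your argument shows $\omega(x)\cap\R_{>0}^S=\emptyset$ for every $x\in\R_{>0}^S$.
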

\begin{proof}
We follow the proof of Proposition 10.8.3 of \parencite{MazzaBenaim}.
  As we have seen previously, $\lambda < 1$  implies that $m\not\in K(\R_{ >0}^S)$ a.s.. Using the convexity of $K(\R_{ >0}^S)$
  we get that there is an hyperplane through $m$ which does not meet $K(\R_{ >0}^S)$.
  There exists a unit vector $u$ orthogonal to this hyperplane such that
  $(g-m)^T u >0$ for all $g\in K(\R_{ >0}^S)$. Let $V:\ \R_{ >0}^S\to\R$ be the map given by
  $$V(y)= -u^T \ln(y).$$
  Let $y(t)$ be a solution of (\ref{slow_dynamic_species}) issued from $\R_{ >0}^S$. Then
  $$\frac{{\rm d}}{{\rm d}t}V(y(t))=- K(y)^T u <0,$$
  showing that $V$ is a strict Lyapunov function.
  The result follows from the fact that no compact limit set can be contained in $\R_{ >0}^S$,
  see, e.g.,
  Theorem 10.5.1 of \parencite{MazzaBenaim} .
\end{proof}

\subsection*{Competitive exclusion}
Proposition \ref{nopersistence} shows that under some conditions, the slow species dynamics is not persistent, suggesting that the only allowed equilibria $y^*$
 are those for which $y_i^* =0$ for some species $i$. 
 Let $\mathcal{I}=\{1\le i\le S;\ y_i^* > 0\}$ be the set of surviving species with $\vert \mathcal{I}\vert = S^*$.
 The authors of \parencite{mehta2021cross} argued, using methods of physics and numerical simulations, that
 $S^* \leq M/2$. This is stronger than the well-known {\it competitive exclusion principle} which suggests that
 one should have $S^* \le M$. The following simple argument suggests that $S^* \le M$ is necessary to ensure the coexistence of surviving species. Assume that $y_i(t) \to y_i^* >0$ for $i\in \mathcal{I}$ and 
 $y_i(t)\to 0$ for $i\not\in \mathcal{I}$. Under reasonable assumptions, 
 $$x_\alpha(t) \underset{t \to \infty}{\to} \sum_{i\in \mathcal{I}}c_{i\alpha} y_i^*,$$
 $$\frac{1}{t}\int_0^t R_\alpha^*(y(s)){\rm d}s \underset{t \to \infty}{\to} R_\alpha^*(y^*).$$
The basic formula (\ref{Integral}) then gives that 
$R^*$ belongs to the following hyperplane  $H_i^0$ of $\R^M$ when $i\in \mathcal{I}$
\begin{equation*}
H_i^0:\ \sum_\alpha (1-l_\alpha)c_{i\alpha }R_\alpha-m_i = 0,
\end{equation*}
and that it belongs to the half-space $H_i^-$ for $i\not\in\mathcal{I}$
\begin{equation}\label{Cond1}
H_i^-:\ \sum_\alpha (1-l_\alpha)c_{i\alpha }R_\alpha-m_i < 0.
\end{equation}
Hence this suggests that
$$R^*(y^*) \in \mathop{\bigcap}_{i\in \mathcal{I}}H_i^0\cap \mathop{\bigcap}_{i\in \mathcal{I}^c}H_i^-.$$
Assuming that the set of hyperplanes $H_i^0$, $i\in \mathcal{I}$ is in general position, 
$${\rm dim}(\mathop{\cap}_{i\in \mathcal{I}}H_i^0)= M-S^*,$$
showing that $S^*\le M$ is required to get an non-empty intersection.

\subsection{Boundedness of slow species trajectories}

\begin{lemma}
    Every birth rate $b_i(y)=g_i(1-l)\sum_{\alpha}c_{i,\alpha} y_i  R_{\alpha}^*(y)$ is bounded by $ \sum_\alpha \mu_\alpha$.
\end{lemma}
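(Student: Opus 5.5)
The plan is to use a conservation (mass-balance) identity for the fast resource equilibrium, not the explicit formula (\ref{equilibrium_resources}). At $R=R^*(y)$ we have $F_\alpha(R^*(y),y)=0$ for every $\alpha$, that is,
$$0=\mu_\alpha-\nu_\alpha R^*_\alpha-R^*_\alpha x_\alpha(y)+\sum_{\beta} l\, D_{\alpha\beta}\, x_\beta(y) R^*_\beta ,$$
where I use $\sum_i c_{i\beta}y_i=x_\beta(y)$ and $D^i\equiv D$. I will sum this identity over $\alpha$. Since $D$ is column-stochastic, $\sum_\alpha D_{\alpha\beta}=1$, so the cross-feeding term collapses to $l\sum_\beta x_\beta R^*_\beta$. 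This gives
$$\sum_\alpha \mu_\alpha=\sum_\alpha \nu_\alpha R^*_\alpha(y)+(1-l)\sum_\alpha x_\alpha(y)R^*_\alpha(y).$$
In the uniform cross-feeding case one can check this directly from the closed form of Proposition~3.2, but the column-stochastic argument is cleaner and covers general $D$.

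Next, I rewrite the second term through the species. Since $x_\alpha(y)=\sum_i c_{i\alpha}y_i$,
$$(1-l)\sum_\alpha x_\alpha R^*_\alpha=\sum_i (1-l)\sum_\alpha c_{i\alpha}y_iR^*_\alpha(y)=\sum_i b_i(y)/g_i .$$
Every summand here is nonnegative, because $c_{i\alpha}\ge 0$, $y_i>0$, $0\le l\le 1$, and $R^*_\alpha(y)>0$ by the positive closed form (\ref{equilibrium_resources}) with $A(y)>0$. The term $\sum_\alpha\nu_\alpha R^*_\alpha$ is also nonnegative. Dropping all the other nonnegative terms then gives, for each fixed $i$,
$$\frac{b_i(y)}{g_i}\le \sum_\alpha \mu_\alpha .$$

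Finally, I will use that the efficiency constants satisfy $0<g_i\le 1$, which is the slow--fast regime $g\ll1$ adopted in Section~\ref{ConsumerFast}. This yields $b_i(y)\le g_i\sum_\alpha\mu_\alpha\le\sum_\alpha\mu_\alpha$. The bound is uniform in $y\in\R^S_{>0}$, which is what is needed afterwards to control trajectories.

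The computations are routine. The only real point is to make explicit the implicit assumption $g_i\le1$ (otherwise the bound is $g_i\sum_\alpha\mu_\alpha$) and to check the positivity of $R^*$. If the resource-dependent leakage $l_\alpha$ were kept, summing the equilibrium equations would give $(1-l_\beta)$ weights in the same way, so the argument carries over unchanged.
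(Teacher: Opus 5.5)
Your proof is correct and follows the paper's own proof: sum the equilibrium equations over $\alpha$, use $\sum_\alpha P_{\beta\alpha}=1$ to collapse the cross-feeding term, drop the nonnegative terms, and use $g_i\le 1$. Your termwise division by $g_i$ is cleaner than the paper's display, which leaves a stray $g_i$ outside the sum over $i$; one small caveat is that, since you claim the argument covers general $D$, the nonnegativity of $R^*(y)$ should then come from $R^*(y)=\mu B_l(y)^{-1}$ with $B_l(y)$ a strictly diagonally dominant Z-matrix (hence a nonsingular M-matrix with $B_l(y)^{-1}\ge 0$), rather than from the uniform closed form (\ref{equilibrium_resources}), which also assumes constant $\mu$ and $\nu$.
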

\begin{proof}
    Consider equilibrium in (\ref{dynamic_resources}) :
\begin{align*}
    0 = \mu_{\alpha} - \nu_{\alpha} R_{\alpha}^*(y) -  \sum_{i}  c_{i,\alpha} y_i R_{\alpha}^*(y)  + l  \sum_{i,\beta} c_{i,\beta} y_i  R_{\beta}^*(y) P_{\beta \alpha},
\end{align*}
which leads to
$$ 0 =\sum_{\alpha} \mu_{\alpha} - \sum_{\alpha} \nu_{\alpha} R_{\alpha}^*(y) -  \sum_{i,\alpha}c_{i,\alpha}  y_i R_{\alpha}^*(y) + l  \sum_{i,\beta} y_i c_{i,\beta} R_{\beta}^*(y) \underbrace{\sum_{\alpha} P_{\beta \alpha} }_{=1}.$$ 
The positivity of the various coefficient $c_{i\alpha}$ and of $R^*_\alpha(y)$ 
shows that
\begin{equation}\label{bound_birth}
    (1-l)\sum_{i,\alpha}c_{i,\alpha} g_i y_i  R_{\alpha}^*(y) = g_i\left(\sum_{\alpha}\mu_{\alpha} - \sum_{\alpha} \nu_{\alpha} R_{\alpha}^*(y)\right) \leq g_i\sum_{\alpha}\mu_{\alpha} \leq \sum_{\alpha}\mu_{\alpha}.
\end{equation}
The sum of the birth rates $\sum_i b_i(y)=(1-l)\sum_{i,\alpha}c_{i,\alpha}g_i y_i  R_{\alpha}^*(y) $ is thus bounded by $\sum_\alpha \mu_\alpha$ so that any of its positive term is also bounded by the same constant.
\end{proof}

\begin{prop}
    All species abundances are bounded over time.
\end{prop}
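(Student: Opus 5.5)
We prove the statement by comparing the total weighted biomass with a linear differential inequality. We work with the slow dynamics (\ref{slow_dynamic_species}) with $l_\alpha\equiv l$. The positive constants are $m_{\min}=\min_i m_i>0$ and $g_{\max}=\max_i g_i$. The natural quantity is the weighted total biomass
$$W(t)=\sum_{i=1}^S \frac{y_i(t)}{g_i},$$
which is well defined and nonnegative since, by (\ref{Integral}), $y_i(t)>0$ for all $t$ whenever $y_i(0)>0$. Bounding $W$ bounds each $y_i\le g_{\max}W$.

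Differentiating along the slow flow gives
$$\frac{dW}{dt}=\sum_i y_i\Big((1-l)\sum_\alpha c_{i\alpha}R^*_\alpha(y)-m_i\Big) = (1-l)\sum_{i,\alpha}c_{i\alpha}y_iR^*_\alpha(y)-\sum_i m_iy_i .$$
For the first term we use the identity established in the proof of the previous Lemma, in its version without the factors $g_i$. Summing the fast equilibrium equation over $\alpha$ and using that $P$ is row-stochastic gives
$$(1-l)\sum_{i,\alpha}c_{i\alpha}y_iR^*_\alpha(y)=\sum_\alpha\mu_\alpha-\sum_\alpha\nu_\alpha R^*_\alpha(y)\le\sum_\alpha\mu_\alpha,$$
because $R^*(y)\ge 0$ (for instance from the closed form (\ref{equilibrium_resources}), or from positivity of $\mu B_l(y)^{-1}$). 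For the second term we use $\sum_i m_iy_i\ge m_{\min}\sum_i y_i\ge m_{\min}\, g_{\min} W$, where $g_{\min}=\min_i g_i$. Together these give the linear differential inequality
$$\frac{dW}{dt}\le \sum_\alpha\mu_\alpha-\kappa W,\qquad \kappa=m_{\min}g_{\min}>0 .$$

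By Gr\"onwall's lemma (comparison with the linear o.d.e.),
$$W(t)\le \max\Big(W(0),\ \frac{\sum_\alpha\mu_\alpha}{\kappa}\Big)\quad\text{for all } t\ge 0 .$$
Hence each $y_i(t)\le g_{\max}\max\big(W(0),\sum_\alpha\mu_\alpha/\kappa\big)$, so all abundances are bounded uniformly in time. Moreover $\limsup_t W(t)\le \sum_\alpha\mu_\alpha/\kappa$, so the region $\{W\le \sum_\alpha\mu_\alpha/\kappa\}$ is absorbing.

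The only delicate point is the choice of weights. If one sums $y_i$ directly, the birth terms carry the factors $g_i$, which do not cancel in the conservation identity. This is why the previous Lemma's bound carries a stray $g_i$. Dividing by $g_i$ makes the birth part exactly equal to the conserved resource-flux identity. We also need $m_i>0$ for all $i$, which holds since the $m_i$ are positive. The step that needs care is justifying $R^*_\alpha(y)\ge0$ for all $y\in\R^S_{>0}$, i.e.\ nonnegativity of $B_l(y)^{-1}$. It is immediate from the closed form under uniform cross-feeding. In general it follows from $B_l(y)$ being a nonsingular M-matrix, by the stability Proposition: $-B_l(y)$ is Metzler with negative spectral bound.
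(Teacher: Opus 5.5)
Your proof is correct, but it is organized differently from the paper's. The paper does not aggregate. It uses the same flux identity only to observe that each species' birth flux $(1-l)y_i\sum_\alpha c_{i\alpha}R^*_\alpha(y)$ is a nonnegative summand of a total bounded by $\sum_\alpha\mu_\alpha$, so each is itself at most $\sum_\alpha\mu_\alpha$. It then applies Gr\"onwall species by species to $\frac{dy_i}{dt}\le g_i\big(\sum_\alpha\mu_\alpha-m_iy_i\big)$. The paper's displayed inequality drops this $g_i$; restoring it only changes the decay rate to $e^{-g_im_it}$, not the conclusion. On that route each $g_i$ factors out of its own species' equation, so the weighting you describe as the delicate point is not needed at all. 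The per-species route also gives sharper individual bounds, $\limsup_t y_i(t)\le \sum_\alpha\mu_\alpha/m_i$, with neither $m_{\min}$ nor the ratio $g_{\max}/g_{\min}$. Your weighted biomass $W=\sum_i y_i/g_i$ gives cruder bounds on individual species, but it controls the community as a whole: $\limsup_t\sum_i y_i(t)\le (g_{\max}/g_{\min})\sum_\alpha\mu_\alpha/m_{\min}$, independent of $S$. Summing the per-species bounds instead gives a quantity growing linearly in $S$. Your argument also yields a single absorbing region for the full vector $y$. For $R^*(y)\ge 0$ beyond the uniform cross-feeding closed form, you do not need the stability proposition. $B_l(y)$ is a Z-matrix whose row sums $\nu_\alpha+(1-l)x_\alpha(y)$ are positive, so it is a nonsingular M-matrix and $B_l(y)^{-1}\ge 0$ entrywise.
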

\begin{proof}
Recalling that we are dealing with the solutions of the slow equation
\begin{equation*}
\frac{dy_i}{dt}=g_iy_i \Big(\sum_{\alpha} (1-l)c_{i\alpha} R^*_{\alpha} (y)-m_i\Big),
\end{equation*}
and using the upper bound (\ref{bound_birth}), we arrive at
$$0\le y_i \sum_{\alpha} (1-l)c_{i\alpha} R^*_{\alpha}(y) \leq \sum_\alpha \mu_\alpha.$$
We get therefore the following differential inequalities 
\begin{equation*}
    \frac{dy_i}{dt} \leq \left( \sum_\alpha \mu_\alpha - y_i m_i\right).
\end{equation*}
This a special case of the Grönwall inequality under its differential form \parencite{mischler}. We get that :
\begin{equation*}
    y_i(t) \leq y_i(0)e^{-m_i t}-\frac{\sum_\alpha \mu_\alpha}{m_i}(e^{-m_i t}-1)
\end{equation*}
This bound ensures that there will be no divergence of the species abundances to infinity, but is of course not optimal.
\end{proof}

\subsection{Stability for slow species system}
 Let $y^*$ be an equilibrium of the slow o.d.e. system. The results of Section \ref{subsection_feasibility} show that
 one should focus on the case where the number of surviving species $S^* = \vert \mathcal{I}\vert$ satisfies $S^*\le M$.
 The Jacobian $J(y^*)$ at some equilibrium  $y^*$ is given by:
\begin{align*}\label{expression_jacobian}
    J_{i,j}(y^*) = \frac{\partial G_i(y^*)}{\partial y_j} = g_i\delta_{i,j} \left[ (1-l) \sum\limits_\alpha c_{i\alpha} R^{*}_\alpha (y^*) -m_i \right] + g_i y_i^* (1-l)\sum\limits_\alpha c_{i\alpha} \frac{\partial  R_\alpha^*(y^*)}{\partial y_j}
\end{align*}

For a given realization of the system, one can always re-arrange the species such that the $S^*  $ first species survives and the $S-S^*$ other species go extinct. Following \parencite{baron2023breakdown}, the jacobian can be written as the following block matrix:
$$J(y^*) = \begin{pmatrix}
    J'(y^*) & \Pi(y^*) \\
    0 & \Omega(y^*)
\end{pmatrix}$$
where $J'(y^*) \in \mathbb{R}^{S^* \times S^*}$ is the reduced jacobian matrix i.e. the jacobian restricted to surviving species. The eigenvalues of $J(y^*)$ are the combination  of those of $J'(y^*)$ and those of $\Omega(y^*)$.
The ones of $\Omega(y^*)$ are its diagonal terms and are negative according to (\ref{Cond1}), and thus linear stability of the equilibrium $y^*$ only depends on the eigenvalues of $J'(y^*)$. The upper right block $\Pi(y^*)$ does not influence the eigenvalues of $J(y^*)$.
\begin{prop}\label{main_prop}
Assume uniform cross-feeding and
 consider constant creation and degradation rates with $\mu_\alpha \equiv \mu$ and $\nu_\alpha \equiv \nu$, for all $\alpha \in \{1,...,M\}$. 
    Then the equilibria of the slow-system are linearly stable.
\end{prop}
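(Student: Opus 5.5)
We reduce to the reduced Jacobian $J'(y^*)$, as the block decomposition above allows. At an equilibrium the bracket $(1-l)\sum_\alpha c_{i\alpha}R^*_\alpha(y^*)-m_i$ vanishes for every surviving species $i\in\mathcal{I}$. Hence
$$J'(y^*)=(1-l)\,\diag(g_iy_i^*)_{i\in\mathcal I}\; C'\,\frac{\partial R^*}{\partial y}(y^*),$$
where $C'$ is the $S^*\times M$ submatrix of $C$ restricted to survivors. The first step is to compute $\partial R^*_\alpha/\partial y_j$ from the closed form (\ref{equilibrium_resources}). Write $z_\alpha=1/(x_\alpha+\nu)$, so that $\partial z_\alpha/\partial y_j=-c_{j\alpha}z_\alpha^2$, and $R^*_\alpha=\mu z_\alpha/A$ with $A=1-l+\frac{\nu l}{M}\sum_\beta z_\beta$. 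Differentiating gives
$$\frac{\partial R^*_\alpha}{\partial y_j}=-\frac{\mu}{A}z_\alpha^2c_{j\alpha}+\frac{\mu\nu l}{MA^2}\,z_\alpha\sum_\beta z_\beta^2 c_{j\beta}.$$
In matrix form,
$$\frac{\partial R^*}{\partial y}=-\frac{\mu}{A}\Big(Z_d^2-\frac{\nu l}{MA}\,Z\,(Z^2)^T\Big)C'^T,$$
where $Z_d=\diag(z_\alpha)$, $Z$ is the vector of the $z_\alpha$, and $Z^2$ is the vector of the $z_\alpha^2$.

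Next, factor $Z_d^2-\frac{\nu l}{MA}Z(Z^2)^T=Z_d\big(I-\frac{\nu l}{MA}\vec 1 Z^T\big)Z_d$. Therefore
$$J'=-\frac{\mu(1-l)}{A}\,\Delta\, C'Z_d\,Q\,Z_dC'^T,\qquad Q=I-\frac{\nu l}{MA}\vec1 Z^T,\quad \Delta=\diag(g_iy_i^*).$$
The matrix $Q$ is not symmetric. However, the rank-one perturbation $\epsilon=\frac{\nu l}{MA}\vec1Z^T$ has its only nonzero eigenvalue equal to $\frac{\nu l}{MA}\sum_\beta z_\beta$. This is strictly less than $1$, because $A=1-l+\frac{\nu l}{M}\sum z_\beta$ with $l<1$. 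The key claim I would prove is that the symmetric part $Q_s=I-\frac{\nu l}{2MA}(\vec1Z^T+Z\vec1^T)$ is positive definite. Its eigenvalues outside $\mathrm{span}(\vec1,Z)^\perp$, where they equal $1$, are $1-\frac{\nu l}{2MA}(\vec1^TZ\pm\sqrt M\,\|Z\|)$. So I need $\frac{\nu l}{2MA}\big(\sum z_\beta+\sqrt M\|Z\|\big)<1$. Since $\nu z_\beta<1$, one has $\sqrt M\|Z\|\le M/\nu$ and $\sum z_\beta<M/\nu$. The bound therefore becomes $\frac{l}{A}<1$ in the worst case, i.e. $A>l$. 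This may fail, since $A$ can be as small as $1-l$, so this is the main obstacle.

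To handle it, I would instead symmetrize with a weighting. Use the similarity $W=Z_d^{1/2}$, or write $Q=Z_d^{-1}\big(I-\frac{\nu l}{MA}Z Z^T Z_d^{-1}\cdots\big)$, aiming to write $Z_dQZ_d=N+\text{(positive semidefinite)}$ with $N$ symmetric. Concretely, note that $Z_dQZ_d=Z_d^2-\frac{\nu l}{MA}Z(Z^2)^T$. I would then test positivity of the quadratic form $v^T Z_dQZ_d v=\sum_\alpha z_\alpha^2v_\alpha^2-\frac{\nu l}{MA}\big(\sum z_\alpha v_\alpha\big)\big(\sum z_\alpha^2 v_\alpha\big)$. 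Cauchy–Schwarz gives $(\sum z_\alpha v_\alpha)(\sum z_\alpha^2v_\alpha)\le \sqrt{\sum z_\alpha^2v_\alpha^2}\,\sqrt{\sum z^2}\cdot\ldots$; combined with $\nu z_\alpha<1$ and the explicit value of $A$, I expect to show the form is bounded below by $c\sum z_\alpha^2v_\alpha^2$ with $c>0$. Failing that, I would use that $\partial R^*/\partial y$ is, up to sign, the derivative of a gradient-like map: $(1-l)C'R^*(y)$ should be the gradient of a concave function $\Phi(y)=\frac{\mu(1-l)}{?}\ln(\ldots)$ on survivors. Indeed $R^*_\alpha\propto z_\alpha/A$ suggests $\Phi=-\frac{\mu}{\nu l/M}\ln A$ up to constants, since $\partial_{y_j}\ln A=-\frac{\nu l}{MA}\sum_\beta c_{j\beta}z_\beta^2$. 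I would look for a potential $\Phi$ with $\nabla\Phi=(1-l)C'R^*$, whose Hessian $H=C'\,\partial R^*/\partial y$ is then symmetric. If such a $\Phi$ exists, $H$ is negative semidefinite by the quadratic-form estimate.

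Finally, with $J'=\Delta H$, $\Delta$ positive diagonal and $H$ symmetric negative definite (definite once $C'$ has full row rank $S^*\le M$, which holds generically), $J'$ is similar to $\Delta^{1/2}H\Delta^{1/2}$. That matrix is symmetric negative definite, so all eigenvalues of $J'$ are real and negative. Together with the negative diagonal block $\Omega(y^*)$ from (\ref{Cond1}), this gives linear stability of $y^*$.
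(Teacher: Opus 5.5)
\textbf{There is a genuine gap: everything after your factorization of $J'$ fails, and the statement itself is false without an extra hypothesis.}

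Your reduction to $J'$ and the factorization $J'=-\frac{\mu(1-l)}{A}\,\Delta\,C'Z_dQZ_dC'^T$ are correct, and they coincide with the paper's $\Lambda$. The two fallbacks that follow do not work.

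\emph{The Cauchy--Schwarz fallback.} With $u=Z_dv$ you have $v^TZ_dQZ_dv=u^TQu$. So the hoped-for bound $c\sum_\alpha z_\alpha^2v_\alpha^2$ is exactly $Q_s\succeq cI$, which you have just seen need not hold. It does fail when $\gamma\sum_\alpha z_\alpha$ is close to $1$ and $z$ is concentrated on one resource, where $\gamma=\nu l/(MA)$.

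\emph{The potential $\Phi$.} It does not exist. Writing $w_\alpha=z_\alpha^2$, one has $H=C'\,\partial R^*/\partial y=-\frac{\mu}{A}\bigl(C'Z_d^2C'^T-\gamma\,(C'z)(C'w)^T\bigr)$. Its rank-one part is symmetric only if $C'z\parallel C'w$. Your candidate $-\frac{\mu M}{\nu l}\ln A$ has gradient $\frac{\mu}{A}C'w$, not $\frac{\mu}{A}C'z$. Hence the final similarity to a symmetric matrix $\Delta^{1/2}H\Delta^{1/2}$ is unavailable.

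\emph{What your first computation does prove.} It gives the claim under the extra hypothesis $A(y^*)\ge l$, for example whenever $l\le 1/2$. Then $Q_s\succ0$, so $X=C'Z_dQZ_dC'^T$ has positive definite symmetric part when $C'$ has full row rank. Consequently $\Delta X\sim\Delta^{1/2}X\Delta^{1/2}$ is positive stable, and $J'$ is stable.

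\emph{Comparison with the paper.} The idea the paper uses is the weight $\pi_\alpha=1/z_\alpha$. In your notation, $Z_dQ=Z_d-\gamma zz^T$ is symmetric, and it is positive definite because $\gamma\sum_\alpha z_\alpha=1-(1-l)/A<1$. The paper phrases this as reversibility of a Markov generator. It follows that $Z_dQZ_d$ is diagonally stable for every $l<1$.

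This does not close your gap either. The paper then asserts that diagonal stability passes from this $M\times M$ matrix to $C'(\cdot)C'^T$ and then to $\Delta(\cdot)$. A diagonal weight on $\R^M$ is not transported by a non-diagonal $C'$.

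\emph{Counterexample.} Take $M=S=2$, $\mu=\nu=1$, $l=0.99$, $c_{1\cdot}=(1,2)$, $c_{2\cdot}=(1,10)$ and $y^*=(1,1)$, with $m_i$ fixed by the equilibrium condition.
\begin{itemize}
\item Then $z=(1/3,1/13)$, $A\approx0.213$, and $\partial_{y_1}\sum_\alpha c_{1\alpha}R^*_\alpha(y^*)\approx+2.0\cdot10^{-2}$.
\item Hence $\mathrm{tr}\,J'(y^*)\approx 2.0\cdot10^{-4}g_1-1.25\cdot10^{-2}g_2$, which is positive whenever $g_2<0.016\,g_1$.
\item With equal $g_i$ the same conclusion holds after replacing $c_{2\cdot}$ by $c_{2\cdot}/100$ and $y^*_2$ by $100$.
\end{itemize}
The mechanism is that, through uniform leakage, species $1$ converts the resource it depletes into the one it prefers. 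Any correct version of the statement therefore needs a hypothesis such as $A(y^*)\ge l$.
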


\begin{proof}
Let $y^*$ be an equilibrium of the slow o.d.e. system. In this proof, we shall for clarity omit the dependency on $y^*$ and denote $A=A(y^*)$ and $x_\alpha = x_\alpha(y^*)$.  For any $j \in \{1,...,S^*\}$:
\begin{equation*}
    \frac{\partial R_\alpha(y^*)}{\partial y_j} = \mu \left( \frac{\frac{\nu l}{M}\sum_\beta \frac{c_{j\beta}}{(x_{\beta}+\nu)^2}}{(1-l+\frac{\nu l }{M}\sum_\beta \frac{1}{x_\beta+ \nu})^2}\frac{1}{x_\alpha + \nu}-\frac{c_{j\alpha}}{(1-l+\frac{\nu l }{M}\sum_\beta \frac{1}{x_\beta + \nu})(x_\alpha+\nu)^2}\right)
\end{equation*}  and so for any $i,j \in \{1,...,S^*\}$:
\begin{equation}\label{Jacobian}
    J'_{ij}(y^*) = g_iy^*_i (1-l) \mu \left( \frac{\frac{\nu l}{M}\sum_\beta \frac{c_{j\beta}}{(x_{\beta}+\nu)^2}}{(1-l+\frac{\nu l }{M}\sum_\beta \frac{1}{x_\beta + \nu})^2}\sum_\alpha \frac{c_{i\alpha}}{x_\alpha + \nu}- \frac{1}{1-l+\frac{\nu l }{M}\sum_\beta \frac{1}{x_\beta + \nu}}\sum_\alpha \frac{c_{i\alpha} c_{j\alpha}}{(x_\alpha+\nu)^2}\right).
\end{equation} We thus get the following formula
\begin{equation*}
    J'(y^*) = -\mu (1-l) \diag(g_iy_i)_{i \in \{1,...,S\}}^* \left(\frac{\frac{\nu l}{M}}{(1-l+\frac{\nu l }{M}\sum_\beta \frac{1}{x_\beta + \nu})^2} uv^T -\frac{1}{1-l+\frac{\nu l }{M}\sum_\beta \frac{1}{x_\beta(y^*) + \nu}}\tilde C \tilde{C}^T \right)
\end{equation*} 
where $u = (
    \sum_\beta\frac{c_{1\beta}}{x_\beta+\nu},\ldots, 
    \sum_\beta\frac{c_{S^*\beta}}{x_\beta+\nu})^T$,
 $v = 
    (\sum_\beta\frac{c_{1\beta}}{(x_\beta+\nu)^2},\ldots,
    \sum_\beta\frac{c_{S^*\beta}}{(x_\beta+\nu)^2})^T$,
 and $\Tilde{C} = C^* \cdot \diag\left(\frac{1}{x_\alpha+\nu}\right)_{\alpha \in \{1,...,M\}} $
 and where $C^*$ is the submatrix of $C$ associated to the lines with $i\in  \mathcal{I}$.
 Notice furthermore that
 $$\sum_\alpha \frac{c_{i\alpha}}{x_\alpha+\nu}
 = \frac{A}{\mu} \sum_\alpha c_{i\alpha}R_\alpha^*(y^*),
 $$
 where we use (\ref{equilibrium_resources}) for
  an equilibrium $y=y^*$  of the slow system and $i\in \mathcal{I}$ which is such that
 $$(1-l)\sum_\alpha c_{i\alpha} R_\alpha^*(y^*)=m_i.$$
 Hence,
 \begin{equation*}\label{form}
 \sum_\alpha \frac{c_{i\alpha}}{x_\alpha +\nu}=\frac{A m_i}{\mu (1-l)}. 
 \end{equation*}
 (\ref{Jacobian}) then becomes
 $$
     J'_{ij}(y^*) =\frac{y^*_i \mu (1-l)}{A} 
 \Big(
 \frac{\nu l m_i}{\mu (1-l)M}\sum_\beta \frac{c_{j\beta}}{(x_\beta +\nu)^2}
 -\sum_\alpha \frac{c_{i\alpha}c_{j\alpha}}{(x_\beta +\nu)^2}\Big).$$
 The reduced Jacobian is thus the sum of two matrices, the first one being of rank one and the second being proportional to 
 $-
 \tilde C \tilde{C}^T$. 

We first show that the real parts of the eigenvalues of the matrix $\Lambda = (\Lambda_{ij})_{1 \leq i,j \leq M}$ where
$$\Lambda_{ij} = 
 \frac{\nu l m_i}{\mu (1-l)M}\sum_\beta \frac{c_{j\beta}}{(x_\beta(y^*) +\nu)^2 }
 -\sum_\alpha \frac{c_{i\alpha}c_{j\alpha}}{(x_\beta(y^*)+\nu)^2},
$$
are negative.
Let $c_\alpha\in\R^{S^*}$ be the column vector of $C$ associated  to $\alpha$, and set
$w_\alpha = z_\alpha^2$ with $z_\alpha = 1/(x_\alpha+\nu)$. Then
$$\Lambda = 
C \sum_\alpha w_\alpha\big(\gamma z e_\alpha^T -e_\alpha e_\alpha^T\big)C^T,$$
where $\gamma = \nu l/(AM)$,
 $e_\alpha $ is the canonical unit vector of $\R^M$ corresponding to $\alpha$, and 
$z=(z_\alpha)_{\alpha =1,\cdots,M}$. $\Lambda$ is  non-positive definite if and only if
the reduced matrix $\gamma z w^T-\diag(w_\alpha)_{\alpha \in \{1,...,M\}}$ is non-positive definite, or, equivalently, if
$$U= \gamma w z^T-\diag(w_\alpha)_{\alpha \in \{1,...,M\}},$$
where $w=(w_\alpha)_{\alpha =1,\cdots,M}$ is non-positive definite.
The edges of the spectrum of its symmetrized version 
$S=(U+U^T)/2$ provide lower and upper bounds for the real parts of the eigenvalues of $U$ (Bendixson Theorem, see e.g. \parencite{stoer1980introduction}). It is then sufficient to show that
 $x^T S x \le 0$ $\forall x\in \R^M$. In fact
$x^T S x = x^T U x = x^T U^T x$, so that we can equivalently  show that
$x^T U x \le 0$ $\forall x\in \R^M$.
Consider the matrix $Q=(q_{\alpha\beta})$ given by
$$q_{\alpha\beta}= \gamma w_\alpha z_\beta,\ \beta\ne\alpha,$$
$$q_{\alpha\alpha}= w_\alpha (d-1)+\gamma w_\alpha z_\alpha,$$
where $d=1-\gamma\sum_\alpha z_\alpha$ is such that
the row sums of $Q = d\ \diag(w_\alpha)_{\alpha \in \{1,...,M\}} +U$ vanish. 
Notice that $d>0$ and $d<1$ since
$\gamma \sum_\alpha z_\alpha = 1 -(1-l)/A<1$, see (\ref{equilibrium_resources}).

$Q$ is the generator of a time-continuous Markov chain with positive out of diagonal entries and negative diagonal entries. All of the  real parts of its eigenvalues are non-positive. We next observe that $\pi = (\pi_\alpha)$ with
$\pi_\alpha = 1/z_\alpha$ is invariant for $Q$:
\begin{eqnarray*}
\sum_\beta \pi_\beta q_{\beta\alpha}&=&\sum_{\beta\ne\alpha}\pi_\beta\gamma w_\beta z_\alpha + \pi_\alpha (w_\alpha (d-1)+\gamma w_\alpha z_\alpha)\\
&=&\gamma z_\alpha \sum_{\beta\ne\alpha} z_\beta -\pi_\alpha \sum_{\beta\ne\alpha}q_{\alpha\beta}\\
&=& \gamma z_\alpha \sum_{\beta\ne\alpha} z_\beta -\pi_\alpha\sum_{\beta\ne\alpha}\gamma w_\alpha z_\beta =0.
\end{eqnarray*}
Consider the scalar product on $\R^M$ given by
$\langle f,g\rangle_\pi = \sum_\alpha f(\alpha)g(\alpha)\pi_\alpha$. The adjoint $Q^*$ of $Q$ with respect to this scalar product is
$$q_{\alpha\beta}^* =\frac{\pi_\beta q_{\beta\alpha}}{\pi_\alpha}.$$
From construction,
$$q_{\alpha\beta}^* = \frac{\frac{1}{z_\beta}\gamma w_\beta z_\alpha}{\frac{1}{z_\alpha}}= \gamma z_\beta w_\alpha = q_{\alpha\beta},$$
so that $Q=Q^*$ is self-adjoint, and therefore all of its eigenvalues are real non-positive, so that
$$\sup_{\vert\vert v\vert\vert_\pi =1} \langle v,Qv\rangle_\pi =0,$$
which yields that
$$d \sum_\alpha v_\alpha^2 w_\alpha  +
\langle v,Uv\rangle_\pi \le 0,\ \forall v\in\R^M,$$
implying that $\langle v,Uv\rangle_\pi < 0$
$\forall v\in\R^M$. Hence
$\langle v,  D_{\pi} U v\rangle <0$ $\forall v\in\R^M$, where $\langle \cdot,\cdot\rangle$ denotes the standard scalar product in $\R^M$
and $D_{\pi}=\diag(\pi_\alpha)_{\alpha \in \{1,...,M\}}$.
This implies that its symmetrized version $(D_{\pi}U +U^T D_{\pi})/2$ is stable, so that $U^T$ is Volterra-Lyapunov diagonally stable, and is thus D-stable, see, e.g. \parencite{cross}. $U^T$ is therefore stable. Thus $C^T U^T C$ is Volterra-Lyapunov diagonally stable and D-stable. This implies the stability of the equilibria since
$$J'(y^*) = ((1-l)\mu/A) \diag(g_iy_i^*)_{i\in \{1,...,S\}} \Lambda$$ with $\Lambda = C^T U^T C$.
\end{proof}

\begin{remark}
    $y^* = \vec{0}$ is always an equilibrium for the slow species o.d.e. system, and for all $\alpha$, $R^*(\vec{0}) = \frac{\mu_\alpha}{\nu_\alpha} $.
The Jacobian at equilibrium point $y^* = \vec{0}$ writes :
$$J_{i,j}(\vec{0})=g_i\delta_{i,j}\left((1-l)\sum_\alpha \frac{c_{i,\alpha}\mu_\alpha}{\nu_\alpha}-m_i \right)$$
This equilibrium is unstable if it exists $i \in \{1,...,S\}$ such that :
$$g_i(1-l)\sum\limits_\alpha\frac{c_{i,\alpha}\mu_\alpha}{\nu_\alpha} > m_i$$
\begin{remark}
    The leakage parameter $l$ does not affect equilibrium stability, but it may affect equilibrium abundances and species survival. In the setting studied numerically by Robert Marsland et al. \parencite{marsland19}, where only one resource is externally supplied ($\mu_1\neq 0$ and $\mu_i= 0$ for all $i\in\{2,...,S\}$), increasing $l$ was found to increase the number of surviving species through resources redistribution induced by the matrix $D$. By contrast, in our setting, where all resources are externally supplied, $l$ has only a negligible influence on the number of survivors.
\end{remark}


\end{remark}

\section{Stochastic slow-fast  system}
We exploit the fact that the o.d.e. (\ref{dynamic_resources}) or (\ref{Gen2B}) is the mass action kinetics of a first order chemical reaction network (CRN) that leads to stochastic dynamics, 
 see,  e.g. \parencite{gadgil} or \parencite{MazzaBenaim}. 
The convergence of the trajectories of such Markov processes toward mass action o.d.e. in large volume limits is well established, for example for density-dependent Markov chains, see, e.g. \parencite{kurtz}. \\

Taking inspiration from the slow-fast approach of MacArthur and the deterministic slow-fast approximation of Section \ref{ConsumerFast}, we consider a pair of stochastic processes
$\eta^g(t)=(R^g(t),Y^g(t))$, $t\ge 0$, taking values in $\mathbb{Z}_{\geq 0}^M\ {\rm x}\ \mathbb{Z}_{\geq 0}^S$ counting the number of resource molecules of each kind and the abundances of the $S$ species at time $t$.
The generator of this time continuous Markov chain is of the form $Q^g= A/g + B$, where $A$ gives the transition rates of the fast process for transitions
like $(r,y)\to (r',y)$
for fixed $y\in \mathbb{Z}_{\geq 0}^S$, and $B$ is the generator giving  the rates of transitions $(r,y) \to (r',y')$ from $\mathbb{Z}_{\geq 0}^M\ {\rm x}\ \mathbb{Z}_{\geq 0}^S$. For more details on such processes, see \parencite{Zhang1997}.
\subsection{Poisson equilibrium for  the fast stochastic system}\label{SCRN}

In our situation, for a given slow state $y\in\mathbb{Z}_{\geq 0}^S$, the fast process of generator $A$
is obtained by observing that 
(\ref{Gen2B}) is the mass action kinetics associate to the CRN given by the chemical reactions
\begin{align*}\label{eq: CRN}
\begin{aligned}
\ch{$\emptyset$ &->[ $\mu_\alpha$ ] $W_\alpha$},\\
\ch{$W_\alpha$ &->[ $\bar\nu_\alpha(y)$ ] $\emptyset$},\\
\ch{$W_\alpha$ &->[ $\kappa_{\alpha,\beta}(y)$ ] $W_\beta$}, \quad \alpha \neq \beta,
\end{aligned}
\end{align*}
where $W_\alpha$ is a symbol for resource $\alpha$  and where we set (see Section \ref{MatrixFormulation})
$$\bar\nu_\alpha(y)=\nu_\alpha +x_\alpha(y)(1-l P_{\alpha \beta}),$$
and
$$\kappa_{\alpha,\beta}(y)=l x_\alpha(y) P_{\alpha\beta},$$
for uptake rates $x_\alpha$ defined by $x_\alpha(y) = \sum_i c_{i\alpha} y_i$.\\

We adopt the standard terminology of CRN theory, see, e.g. \parencite{craciun,MazzaBenaim}.
This CRN is maybe not the only choice of all CRN giving the same dynamics but it is unique among those being a weakly reversible deficiency
zero network \parencite{craciun2021uniqueness}.
The species set of the CRN is ${\mathcal S}=\{ W_\alpha, \ \alpha=1,\ldots, M\}$, of size
$M$ which are also given by
the canonical basis vectors $e_\alpha$ of $\R^M$. The complexes set of this CRN is ${\mathcal C}=\{\emptyset, W_\alpha,\ \alpha=1,\ldots, M\}$, of size
$M+1$.   The set of reactions is given by the directed graph
$${\mathcal E}=\{(\emptyset, W_\alpha),\ (W_\alpha,\emptyset),\ (W_\alpha, W_\beta), \forall \alpha \neq \beta\},$$
when the related transition rates are positive.
The CRN is weakly reversible when, e.g., 
$L_l$ is irreducible, see (\ref{Laplacienl}). In this case, the CRN has a single linkage class, and its stoichiometric subspace is the span of the vectors
$e_\alpha$, so that its dimension equals $M$. The deficiency is thus 
$\delta =(M+1)- 1- M=0$. The deficiency zero Theorem (see \parencite{craciun})  states that
each compatibility class contains a unique equilibrium $R^*(y)$ with $F_y(R^*(y))=0$ (see Sections \ref{ConsumerFast} and \ref{ClosedForm}) and that
the related stochastic mass action kinetics has a unique steady state  distribution $\pi_y$
of product Poisson form
$$\pi_y(r)=\prod_{\alpha=1}^M \frac{(R_\alpha^*(y))^{r_\alpha}}{(r_\alpha)!} e^{-R_\alpha^*(y)},$$
where $r=(r_1,\ldots,r_M)^T\in\mathbb{Z}_{\geq 0}^M$ is any possible resource configuration, see, e.g., \parencite{craciun,gadgil}.

\subsection{The slow species Markov chain}
To define the slow Markov chain, we take inspiration from \parencite{Zhang1997}, where an aggregated process $\bar \eta^g(t)$  
is defined by setting $\bar\eta^g(t)= y\in\mathbb{Z}_{\geq 0}^S$ when $\eta^g(t)\in \{y\}\ {\rm x}\ \mathbb{Z}_{\geq 0}^M$. When $g\to 0$, the authors of \parencite{Zhang1997} proved (in the finite case) that the aggregated process converges toward a limiting Markov chain on the set of slow states from $\mathbb{Z}_{\geq 0}^S$.
Basically, transitions between slow states $y$ and $y'$ from $\mathbb{Z}_{\geq 0}^S$
are obtained by averaging transition rates of transition $(r,y)\to (r',y')$ when $r$ is assumed to follow the fast CRN steady-state distribution $\pi_y$.\\

We propose therefore a stochastic model where
the slow Markov chain of states $y =(y_i)_{i=1,\ldots,S}$ evolves in the slow state space
$E_s =\mathbb{Z}_{\geq 0}^S$. Given $y\in E_s $ and $i=1,\ldots,S$, define $\bar y_i$ to be the configuration $y$ without its $i$th component, so that
$y=(\bar y_i,y_i)$.
The transition rate  $q^s(y,y')$ of the slow process for $y\in E_s$
and $y'=(\bar y_i,y_i+1)$, is given by
\begin{eqnarray*}
q^s(y,y')&=& g_i(1-l)\sum_r \sum_\alpha \pi_y(r) c_{i\alpha}  y_i r_\alpha\\
      &=& g_i(1-l)y_i\sum_\alpha c_{i\alpha} \sum_r \pi_y(r) r_\alpha \\
      &=& g_i (1-l) y_i \sum_\alpha c_{i\alpha}\E_{\pi_y}(r_\alpha)\\
      &=& g_i(1-l) y_i \sum_\alpha c_{i,\alpha} R_\alpha^*(y),
\end{eqnarray*}
and for degradation when $y'=(\bar y_i,y_i-1)$,
$$q^s(y,y')=g_im_i y_i.$$

Each component $y_i(t)$ evolves as a birth and death process, with  transition rates that depend on the state $y(t)$ and on  the instantaneous equilibrium $R^*(y(t))$ with $F_{y(t)}(R^*(y(t)))=0$. $y(t)$ is a Multivariate Competition Process (MCP) \parencite{iglehart1964multivariate} 
of birth and death rates  given by
$$b_i(y) = g_iy_i(1-l) \sum_{\alpha}  c_{i \alpha} R_{\alpha}^*(y)\hbox{ and }
d_i(y)=g_i m_i y_i,$$
for $i
=1,\cdots,S$.
 Once a
coordinate $y_j(t)$ of the process hits 0 i.e. when species $j$ gets extinct, it remains equal to 0.
The subset $\Delta := \mathbb{Z}_{\geq 0}^S - \mathbb{Z}_{>0}^S$ is
absorbing for the process $y(t)$, and it is natural to look at the  quasi-stationary distribution, which is the distribution conditioned to non-extinction \parencite{meleard2012quasi} .

\subsubsection{Absorption to \texorpdfstring{$\vec{0}$}{Lg}}
Following \parencite{chazottes_time_2019}, consider the monotype birth and death process $\left(\sum_{i=1}^S y_i(t)\right)_{t\geq0}$. 

Using (\ref{bound_birth}) and the inequality $0<g_i<1$, this process is stochastically bounded by another monotype birth and death process with birth rate at $k \in \mathbb{Z}_{\geq 0}$ given by $b^*(k)=\sum_\alpha \mu_\alpha \equiv \phi$ and $d^*(k) = k \min_i g_im_i  $.


 Using the fact that $$\sum_{k=1}^\infty \frac{d^*(1)\dots d^*(k)}{b^*(1)\dots b^*(k)} = \sum_{k \geq 1} \left(\frac{\min_i m_i}{\phi}\right)^k k!$$ diverges, 
 Theorem 5.5.5. of \parencite{meleard2016modeles},
shows that the birth and death process $\left(\sum_{i=1}^S y_i(t)\right)_{t\geq0}$ go extinct almost surely, and therefore the multi-type birth and death process $y(t)$ is absorbed in  $\vec{0}$ with probability one.

Let
$$\tau_{\Delta} = \inf \{t \geq 0;  y(t) \in \Delta \} ,$$
be the absorption time of the process in $\Delta$. Results of  \parencite{iglehart1964multivariate} show that
the process $y(t)$ is  absorbed a.s. in $\Delta$ with finite mean absorption time i.e.
$$ \mathbb{E}(\tau_{\Delta}) < + \infty .$$

\subsubsection{Convergence to a quasi-stationary distribution (QSD)}

As  $y(t)$ is absorbed a.s. in $\Delta$ with finite mean absorption time, the process does not possess a stationary distribution.
The quasi-stationary distribution (QSD) is the law of the process conditioned to non-extinction. Let $E^*=\Z_{\geq 0}^S\setminus \Delta$. The following definition can be found. e.g., in  \parencite{meleard2012quasi}:
 \begin{definition}  
A probability measure $\alpha$ on $E^*$ is a quasi-stationary distribution for the CTMC $y(t)$, if $ \forall t \geq 0$ and any measurable set $A \subset E^*$:
\begin{align*}
    \alpha(A) = \mathbb{P}_{\alpha}(y_t \in A |\tau_{\Delta}<t)
\end{align*}
\end{definition}
A large literature on QSD has emerged recently with a focus on population processes see e.g. \parencite{meleard2012quasi,collet2013quasi}. If finding the QSD is a  difficult problem \parencite{benaim2021stochastic}, it is possible to find criteria on the transition rates ensuring  convergence of the law of the process toward  a QSD, see, e.g, \parencite{champagnat2023general}.
Some further conditions on the transition rates are needed to obtain uniform exponential convergence to a unique QSD \parencite{champagnat2021lyapunov}. We now show that the process $y(t)$ satisfies a sufficient condition ensuring the convergence of the law of the process toward  a QSD. For  multitype birth and death process $y(t)$,
Example 5.4 of \parencite{champagnat2023general} shows that a sufficient condition  ensuring the convergence is the existence of a constant $\theta > 1$ such that
\begin{equation}\label{condition_qss}
    \sum_{i=1}^S (d_i(y)- \theta b_i(y)) \xrightarrow[y \in E^*, |y|\to +\infty]{} +\infty 
\end{equation}  where $|y| = y_1+...+y_S$.
Let $\theta > 1$. Then
\begin{align*}
    \sum_{i=1}^S \left(d_i(y)- \theta b_i(y)\right) &= \sum_{i=1}^S g_i\left(y_i m_i- \theta  \sum_{\alpha} c_{i,\alpha} y_i R_{\alpha}\right) \\
    & \geq  \sum_{i=1}^S g_i y_i m_i - S \theta \phi\\
    & \geq |y| \min_{i \in \{1,...,S\}} g_im_i  - S \theta \phi  \xrightarrow[|y|\to +\infty]{}+\infty,
\end{align*}
and it follows that  (\ref{condition_qss}) is satisfied.  Theorem 3.5 of \parencite{champagnat2023general} ensures then that $y(t)$ admits a quasi-stationary distribution. However, the sufficient conditions given in \parencite{champagnat2021lyapunov}  for uniform exponential convergence to a unique QSD are not satisfied.

\begin{remark}
    Note that here by considering absorption to $\Delta$, we have an irreducible state space $E^* = \mathbb{Z}_{\geq 0}^S-\Delta$ which is not the case if we consider absorption in $\vec{0}$ and $E^* = \mathbb{Z}_{\geq 0}^S-\vec{0}$.
\end{remark} 

\section{Conclusion}
Microbial communities play a vital role in ecosystems and human health, yet their complex dynamics—including competition and cross-feeding—are poorly captured by traditional models. This study advances our understanding by analyzing the Microbial Consumer-Resource Model (MiCRM), which explicitly incorporates metabolite exchanges. Using a slow-fast approximation, we demonstrate that microbial communities are unlikely to persist when consumer species outnumber resources, while balanced communities remain stable. Extending the model stochastically also reveals how variability can drive extinction events.
Even if no model captures nature perfectly, this work provides a more realistic framework for studying microbial communities, paving the way for further applications.

\section{Funding}
This work was supported in part by the Swiss National Science Foundation (Sinergia program, grant CRSII5 189919/1 to J.M. and C.M.).

\printbibliography

@article{marsland19,
  title={Available energy flux drive a transition in the diversity, stability, and functional structure of microbial communities},
  author={Marsland III, Robert and Cui, Wenping and Goldford, Joshua and Sanchez, Alvaro  and Mehta, Pankaj},
  journal={PLOS Comput. Biol.},
  year={2019},
  doi={https://doi.org/10.1371/journal.pcbi.1006793 }
  }

@article{mustri,
  title={Accuracy of the Lotka-Volterra model fails in strongly coupled microbial consumer-resource systems},
  author={Mustri, Michael P and Duan, Quqiming and Pawar, Samraat},
  journal={PLOS Computational Biology},
  volume={21},
  number={12},
  pages={e1013719},
  year={2025},
  publisher={Public Library of Science San Francisco, CA USA},
  doi={https://doi.org/10.1371/journal.pcbi.1013719 }
}

@article{momeni,
  title={Lotka-Volterra pairwise modeling fails to capture diverse pairwise microbial interactions},
  author={Momeni, Babak and Xie, Li and Shou, Wenying},
  journal={elife},
  volume={6},
  pages={e25051},
  year={2017},
  publisher={eLife Sciences Publications, Ltd},
  doi={https://doi.org/10.7554/elife.25051 }
}

@article{monod,
  title={The Growth of Bacterial Cultures},
  author={Monod, Jacques},
  journal={Annual Review on Microbiology},
  volume={3},
  pages={371--394},
  year={1949},
  doi={https://doi.org/10.1016/b978-0-12-460482-7.50020-8 }
  }

@article{van2022ecological,
  title={Ecological modelling approaches for predicting emergent properties in microbial communities},
  author={van den Berg, Naomi Iris and Machado, Daniel and Santos, Sophia and Rocha, Isabel and Chac{\'o}n, Jeremy and Harcombe, William and Mitri, Sara and Patil, Kiran R},
  journal={Nature Ecology \& Evolution},
  pages={1--11},
  year={2022},
  publisher={Nature Publishing Group},
  doi={https://doi.org/10.1038/s41559-022-01746-7 }
}

@article{culp2023cross,
  title={Cross-feeding in the gut microbiome: ecology and mechanisms},
  author={Culp, Elizabeth J and Goodman, Andrew L},
  journal={Cell Host \& Microbe},
  volume={31},
  number={4},
  pages={485--499},
  year={2023},
  publisher={Elsevier},
  doi={https://doi.org/10.1016/j.chom.2023.03.016 }
}

@article{lobry,
  title={On Tykhonovs Theorem for convergence of solutions of slow and fast systems},
  author={Lobry, C. and Sari, T. and Touhami, S.},
  journal={Electronic Journal of Differential Equations},
  volume={19},
  pages={1--22},
  year={1998},
  }

@article{craciun,
  title={Product-Form Stationary Distributions for Deficiency Zero Chemical Reaction Networks},
  author={Anderson, D. and Craciun, G. and Kurtz, T.},
  journal={Bul. Math. Biol.},
  volume={72},
  pages={1947--1970},
  year={2010},
  doi={https://doi.org/10.1007/s11538-010-9517-4 }
  }

@article{sherman1950adjustment,
  title={Adjustment of an inverse matrix corresponding to a change in one element of a given matrix},
  author={Sherman, Jack and Morrison, Winifred J},
  journal={The Annals of Mathematical Statistics},
  volume={21},
  number={1},
  pages={124--127},
  year={1950},
  publisher={JSTOR},
  doi={https://doi.org/10.1214/aoms/1177729893 }
}

@article{collet2013quasi,
  title={Quasi-stationary distributions},
  author={Collet, Pierre and Mart{\'\i}nez, Servet and San Mart{\'\i}n, Jaime},
  journal={Markov chains, diffusions and dynamical systems},
  year={2013},
  doi={ https://doi.org/10.1007/978-3-642-33131-2 }
}

@article{altenberg,
  title={Karlin theory on growth and mixing extended to linear differential equations},
  author={Altenberg, Lee},
  journal={arXiv preprint arXiv:1006.3147},
  year={2010}
}

@article{chen2022two,
  title={Two novel proofs of spectral monotonicity of perturbed essentially nonnegative matrices with applications in population dynamics},
  author={Chen, Shanshan and Shi, Junping and Shuai, Zhisheng and Wu, Yixiang},
  journal={SIAM Journal on Applied Mathematics},
  volume={82},
  number={2},
  pages={654--676},
  year={2022},
  publisher={SIAM},
  doi={https://doi.org/10.1137/20m1345220 }
}

@article{chen2023impact,
  title={On the impact of spatial heterogeneity and drift rate in a three-patch two-species Lotka--Volterra competition model over a stream},
  author={Chen, Shanshan and Liu, Jie and Wu, Yixiang},
  journal={Zeitschrift f{\"u}r angewandte Mathematik und Physik},
  volume={74},
  number={3},
  pages={117},
  year={2023},
  publisher={Springer},
  doi={https://doi.org/10.1007/s00033-023-02009-6 }
}

@book{meleard2016modeles,
	series = {Mathématiques et {Applications}},
	title = {Modèles aléatoires en {Ecologie} et {Evolution}},
	volume = {77},
	copyright = {http://www.springer.com/tdm},
	isbn = {9783662494547 9783662494554},
	url = {http://link.springer.com/10.1007/978-3-662-49455-4},
	language = {fr},
	urldate = {2026-03-26},
	publisher = {Springer Berlin Heidelberg},
	author = {Méléard, Sylvie},
	year = {2016},
	doi = {10.1007/978-3-662-49455-4},
}

@article{mehta2021cross,
  title={Cross-feeding shapes both competition and cooperation in microbial ecosystems},
  author={Mehta, Pankaj and Marsland III, Robert},
  journal={arXiv preprint arXiv:2110.04965},
  year={2021},
  doi={https://doi.org/10.1101/2021.10.10.463852 }
}

@article{meleard2012quasi,
	title = {Quasi-stationary distributions and population processes},
	doi = {10.1214/11-PS191},
	urldate = {2026-03-26},
	journal = {Probability Surveys},
	author = {Méléard, Sylvie and Villemonais, Denis},
	year = {2012},
    doi={https://doi.org/10.1214/11-ps191 }
}

@article{champagnat2021lyapunov,
  title={Lyapunov criteria for uniform convergence of conditional distributions of absorbed Markov processes},
  author={Champagnat, Nicolas and Villemonais, Denis},
  journal={Stochastic Processes and their Applications},
  volume={135},
  pages={51--74},
  year={2021},
  publisher={Elsevier},
  doi={https://doi.org/10.1016/j.spa.2020.12.005 }
}

@article{champagnat2023general,
  title={General criteria for the study of quasi-stationarity},
  author={Champagnat, Nicolas and Villemonais, Denis},
  journal={Electronic Journal of Probability},
  volume={28},
  pages={1--84},
  year={2023},
  publisher={The Institute of Mathematical Statistics and the Bernoulli Society},
  doi={https://doi.org/10.1214/22-ejp880 }
}

@article{poggiale2020analysis,
  title={Analysis of a predator--prey model with specific time scales: a geometrical approach proving the occurrence of canard solutions},
  author={Poggiale, Jean-Christophe and Aldebert, Clement and Girardot, Benjamin and Kooi, Bob W},
  journal={Journal of mathematical biology},
  volume={80},
  number={1},
  pages={39--60},
  year={2020},
  publisher={Springer},
  doi={https://doi.org/10.1007/s00285-019-01337-4 }
}

@article{dougoud,
  title={The feasibility of equilibria in large ecosystems: a primary but neglected concept in the complexity-stability debate},
  author={Dougoud, M. and Vinckenbosch, L. and Rohr, RP. and Bersier, L-F. and Mazza, C.},
  journal={PLOS Computational Biology},
  volume={14},
  number={2},
  year={2018},
  doi={https://doi.org/10.1371/journal.pcbi.1005988 }
  }

@article{jamal2,
  title={Complex systems in ecology: a guided tour with large Lotka--Volterra models and random matrices},
  author={Akjouj, Imane and Barbier, Matthieu and Clenet, Maxime and Hachem, Walid and Maida, Mylene and Massol, Francois and Najim, Jamal and Tran, Viet Chi},
  journal={Proceedings of the Royal Society A: Mathematical, Physical and Engineering Sciences},
  volume={480},
  number={2285},
  year={2024},
  publisher={The Royal Society},
  doi={https://doi.org/10.1098/rspa.2023.0284 }
}

@article{jamal1,
  title={Positive Solutions for Large Random Linear Systems},
  author={Bizeul, P. and Najim, J.},
  journal={Proc. AMS},
  volume={149},
   number={6},
  pages={2333--2348},
  year={2021},
  doi={https://doi.org/10.1109/icassp40776.2020.9053593 }
  }

@article{stone,
  title={The Google matrix controls the stability of structured ecological and biological networks},
  author={Stone, Lewi},
  journal={Nature communications},
  volume={7},
  number={1},
  pages={12857},
  year={2016},
  publisher={Nature Publishing Group UK London},
  doi={https://doi.org/10.1038/ncomms12857 }
}

@article{Guex,
  title={Regulated bacterial interaction network: A mathematical framework to describe growth under inclusion of metabolite cross-feeding},
  author={Guex, Isaline and Mazza, Christian and Dubey, Manupriam and Li, Renyi and Batsch, Maxime and van der Meer, Jan Roelof},
  journal={PLOS Computational Biology},
  year={2023},
  doi={https://doi.org/10.1371/journal.pcbi.1011402 }
  }

@article{clegg,
  title={Variation in thermal physiology can drive the temperature-dependence of microbial communities richness},
  author={Clegg, T. and Pawar, S.},
  journal={Elife},
   volume={13},
  year={2024},
  doi={https://doi.org/10.7554/elife.84662 }
  }

@article{Marsland2019,
  title={The minimum environmental perturbation principle: A new perspective on niche theory},
  author={Marsland III, Robert and Cui, Wenping and Mehta, Pankaj},
  journal={The American Naturalist},
  volume={196},
  number={3},
  pages={291--305},
  year={2020},
  publisher={The University of Chicago Press Chicago, IL},
  doi={https://doi.org/10.1086/710093 }
}

@book{kurtz,
	title={Markov Processes: Characterization and Convergence},
    series = {Wiley series in probability and mathematical statistics},
	author={Ethier, S. and Kurtz, T.},
	year={1986},
	publisher={Wiley, New-York}
}

@article{Zhang1997,
  title={Structural properties of Markov chains with weak and strong interactions},
  author={Zhang, Q. and Yin, G.},
  journal={Stoch. Proc. Appl.},
  pages={181--197},
  year={1997},
  publisher={Elsevier},
  doi={https://doi.org/10.1007/978-1-4612-0627-9_7 }
}

@article{marsland2020community,
  title={The Community Simulator: A Python package for microbial ecology},
  author={Marsland, Robert and Cui, Wenping and Goldford, Joshua and Mehta, Pankaj},
  journal={Plos one},
  volume={15},
  number={3},
  pages={e0230430},
  year={2020},
  publisher={Public Library of Science San Francisco, CA USA},
  doi={https://doi.org/10.1371/journal.pone.0230430 }
}

@article{marsland2020minimal,
  title={A minimal model for microbial biodiversity can reproduce experimentally observed ecological patterns},
  author={Marsland III, Robert and Cui, Wenping and Mehta, Pankaj},
  journal={Scientific reports},
  volume={10},
  number={1},
  pages={3308},
  year={2020},
  publisher={Nature Publishing Group UK London},
  doi={https://doi.org/10.1038/s41598-020-60130-2 }
}

@article{MacArthur69,
  title={Species packing and what competition minimizes},
  author={MacArthur, R.H.},
  journal={Proceedings of the National Academy of Sciences},
  pages={1369--1371},
  year={1969},
  publisher={},
  doi={https://doi.org/10.1073/pnas.64.4.1369 }
}

@article{MacArthur70,
  title={Species packing and Competitive Equilibrium for Many Species},
  author={MacArthur, R.H.},
  journal={Theoretical Population Biology},
  pages={1--11},
  year={1970},
  publisher={},
  doi={https://doi.org/10.1016/0040-5809(70)90039-0 }
}

@article{gadgil,
	title={A stochastic analysis of first-order reaction networks},
	author={Gadgil, C. and Lee, C. and Othmer, H.},
	journal={Bull. Math. Biol.},
	volume={67},
	pages={901--946},
	year={2005},
	publisher={Springer},
    doi={https://doi.org/10.1016/j.bulm.2004.09.009 }
}

@misc{mischler,
  author        = {Stéphane Mischler},
  title         = {An introduction to evolution PDEs},
  year          = {2019},
howpublished = {Lecture notes, Université Paris Dauphine}
}

@article{benaim2021stochastic,
	title = {Stochastic approximation of quasi-stationary distributions for diffusion processes in a bounded domain},
	volume = {57},
	issn = {0246-0203},
	doi = {https://doi.org/10.1214/20-aihp1093 },
	number = {2},

	journal = {Annales de l'Institut Henri Poincaré, Probabilités et Statistiques},
	author = {Benaïm, Michel and Champagnat, Nicolas and Villemonais, Denis},
	year = {2021},
}

@book{MazzaBenaim,
  title={{Stochastic Dynamics for Systems Biology}},
  author= {Christian Mazza and Michel Benaim},
  Series = {Mathematical and Computational Biology Series},
  year={2014},
  publisher={CRC Press}	
  }

@book{stoer1980introduction,
  title={Introduction to numerical analysis},
  author={Stoer, Josef and Bulirsch, Roland and Bartels, R and Gautschi, Walter and Witzgall, Christoph},
  volume={1993},
  year={1980},
  publisher={Springer}
}

@article{cross,
title={Three Types of Matrix Stability},
author={Cross, G.},
journal={Lin. Alg. Appl.},
  volume={20},
  pages={253--263},
  year={1978},
  doi={https://doi.org/10.1016/0024-3795(78)90021-6 }
}

@article{chesson1990macarthur,
  title={MacArthur's consumer-resource model},
  author={Chesson, Peter},
  journal={Theoretical Population Biology},
  volume={37},
  number={1},
  pages={26--38},
  year={1990},
  publisher={Elsevier},
  doi={ https://doi.org/10.1016/0040-5809(90)90025-q }
}

@article{craciun2021uniqueness,
  title={Uniqueness of weakly reversible and deficiency zero realizations of dynamical systems},
  author={Craciun, Gheorghe and Jin, Jiaxin and Polly, Y Yu},
  journal={Mathematical biosciences},
  volume={342},
  pages={108720},
  year={2021},
  publisher={Elsevier},
  doi={https://doi.org/10.1016/j.mbs.2021.108720 }
}

@article{baron2023breakdown,
  title={Breakdown of random-matrix universality in persistent Lotka-Volterra communities},
  author={Baron, Joseph W and Jewell, Thomas Jun and Ryder, Christopher and Galla, Tobias},
  journal={Physical Review Letters},
  volume={130},
  number={13},
  pages={137401},
  year={2023},
  publisher={APS},
  doi={https://doi.org/10.1103/physrevlett.130.137401 }
}

@article{iglehart1964multivariate,
  title={Multivariate competition processes},
  author={Iglehart, Donald L},
  journal={The Annals of Mathematical Statistics},
  pages={350--361},
  year={1964},
  publisher={JSTOR},
  doi={https://doi.org/10.1214/aoms/1177703758 }
}

@article{chazottes_time_2019,
	title = {On time scales and quasi-stationary distributions for multitype birth-and-death processes},
	volume = {55},
	issn = {0246-0203},
	number = {4},
	journal = {Annales de l'Institut Henri Poincaré, Probabilités et Statistiques},
	author = {Chazottes, J.-R. and Collet, P. and Méléard, S.},
	year = {2019},
    doi={https://doi.org/10.1214/18-aihp948 }
}
\end{document}